 \newtheorem{thm}{Theorem}[section]
 \newtheorem{cor}[thm]{Corollary}
 \newtheorem{lem}[thm]{Lemma}
 \newtheorem{prop}[thm]{Proposition}
 \theoremstyle{definition}
 \newtheorem{defn}[thm]{Definition}
 \theoremstyle{remark}
 \numberwithin{equation}{section}
\newcommand{\Spin}{\mathop{\mathrm{Spin}}}
\newcommand{\Sp}{\mathop{\mathrm{Sp}}}
\newcommand{\Cl}{\mathop{\mathrm{Cl}}}
\begin{document}
\selectlanguage{british} 
%
%
%
%
%
%
%
%
%
\bibliographystyle{plain}

\title[Clifford algebras and Coxeter groups]
 {Clifford algebra  unveils a surprising geometric significance of  quaternionic root systems of Coxeter groups}

\author[Pierre-Philippe Dechant]{Pierre-Philippe Dechant}

\address{Mathematics Department \\University of York \\ Heslington, York, YO10 5GG \\ United Kingdom \\ \\ and \\ \\
Institute for Particle Physics Phenomenology \\Ogden Centre for Fundamental Physics \\ Department of Physics \\ University of Durham \\ South Road \\ Durham, DH1 3LE \\ United Kingdom}

\email{pierre-philippe.dechant@durham.ac.uk}

\subjclass{Primary 51F15, 20F55; Secondary 15A66, 52B15}
 
\keywords{Clifford algebras, Coxeter groups, root systems, quaternions, representations, spinors, binary polyhedral groups}

\date{\today}

\begin{abstract}
	Quaternionic representations of Coxeter (reflection) groups of ranks 3 and 4, as well as those of $E_8$, have been used extensively in the literature.
	The present paper  analyses such Coxeter groups in the Clifford Geometric Algebra framework, which affords a simple way of performing reflections and rotations whilst exposing more clearly the underlying geometry. 
	The Clifford approach shows that the quaternionic representations in fact have very simple geometric interpretations. 
	The representations of the  groups $A_1 \times A_1 \times A_1$, $A_3$, $B_3$ and $H_3$ of rank 3 in terms of pure quaternions are shown to be simply the Hodge dualised root vectors, which determine the reflection planes of the Coxeter groups.
	Two successive  reflections result in a rotation, described by the geometric product of the two reflection vectors, giving a  Clifford spinor. 
The spinors for the  rank-3 groups $A_1 \times A_1 \times A_1$, $A_3$, $B_3$ and $H_3$ yield a new simple construction of binary polyhedral groups. These in turn generate the  groups $A_1 \times A_1 \times A_1\times A_1$, $D_4$, $F_4$ and $H_4$ of rank 4, and their widely used quaternionic representations are  shown to be spinors in disguise. 
	Therefore, the Clifford geometric product in fact \emph{induces} the rank-4 groups  from the rank-3 groups.
	In particular, the groups $D_4$, $F_4$ and $H_4$  are  exceptional structures, which our study  sheds new light on.

	IPPP/12/26, DCPT/12/52\\ 
\end{abstract}

\maketitle
\tableofcontents

\section{Introduction}\label{HGA_intro}

Reflective symmetries are ubiquitous in  nature and in mathematics. They are central to the very earliest of mathematics in the guise of the Platonic and Archimedean solids, as well as to the latest endeavours of finding a fundamental physical theory, such as string theory. The current group theoretic paradigm for reflection groups is Coxeter group theory \cite{Coxeter1934discretegroups, Humphreys1990Coxeter}, which axiomatises reflections from an abstract mathematical point of view. Clifford Geometric Algebra \cite{Hestenes1966STA, LasenbyDoran2003GeometricAlgebra} is a complementary framework that instead of generality and abstraction focuses on the physical space(-time)  that we live in and its given Euclidean/Lorentzian metric. This exposes more clearly the geometric nature of many problems in mathematics and physics. In particular, Clifford Geometric Algebra has a uniquely simple formula for performing reflections. Previous research appears to have made exclusive use of one framework at the expense of the other. Here, we advocate a combined approach that results in geometric insights from Geometric Algebra that apparently have been overlooked in Coxeter theory thus far. 
Hestenes \cite{Hestenes2002PointGroups} has given a thorough treatment of point and space groups (including the Coxeter groups of rank 3 of interest here) in Geometric Algebra, and Hestenes and Holt \cite{Hestenes2002CrystGroups} have discussed the crystallographic point and space groups, from a conformal point of view. 
Our emphasis here lies on applying Geometric Algebra to the Coxeter framework, in particular the root systems and representations.  

It has been noticed that certain root systems in Coxeter group theory can be realised in terms of Hamilton's quaternions \cite{Hamilton1844, val1964homographies, Conway03onquaternions} -- most notably the $E_8$ root system -- and quaternionic representations of Coxeter groups are used extensively in the literature for problems ranging from polytopes to quasicrystals and elementary particle theory \cite{Wilson1986E8, Tits1980Quaternions,  MoodyPatera:1993b,Chen95non-crystallographicroot, Koca1989E8, Koca:1998, Koca2003A4B4F4,Koca2006H4,Koca2007Polyhedra, Koca2010Catalan, Koca2010QuasiregularII, Koca2011Chiral}. However, all that is used in this approach are the algebraic properties of the quaternion multiplication law, conjugation and inner product. It is also obvious that the quaternionic description can not extend to  arbitrary number of dimensions. 

In this paper, we show that the quaternionic rank-3 and 4 Coxeter group representations have very simple geometric interpretations as (Hodge duals of)  vectors and as spinors, respectively, which are shrowded in the purely algebraic approach followed in the literature. We demonstrate how these structures in fact arise very naturally and straightforwardly from simple geometric considerations in Geometric Algebra. Furthermore, the concepts of vectors and spinors readily generalise to arbitrary dimensions.  
There is another surprising insight that concerns the connections between the groups of rank 3 and those of rank 4. 
In  low dimensions, there are a number of exceptional structures ($E_8$; $H_4$, $F_4$, $D_4$; $H_3$, $B_3$, $A_3=D_3$), that in the conventional approach seem unrelated. Here, we demonstrate how in fact the rank-4 groups can be derived from the rank-3 groups via the geometric product of Clifford Geometric Algebra. This is complementary to the top-down approaches of projection, for instance from $E_8$ to $H_4$ \cite{Shcherbak:1988,MoodyPatera:1993b, Koca:1998, Koca:2001, DechantTwarockBoehm2011E8A4}, or of generating subgroups by deleting nodes in Coxeter-Dynkin diagrams.

This paper is organised as follows. Section \ref{HGA_Cox} introduces the fundamentals of Coxeter group theory necessary to understand the argument. 
Section \ref{HGA_quat} summarises the basic properties of quaternions, and presents the quaternionic representations used in the literature. 
In Section \ref{HGA_GA}, we introduce the basics of Geometric Algebra necessary for our discussion and set out the notation that we will use. 
In Section \ref{HGA_A13}, we begin with the simple example of the group $A_1\times A_1 \times A_1$ as an illustration of the logic behind our approach, as it exhibits the same structural features as the later cases without most of the complexity. The Geometric Algebra reflection formalism is used to generate the full reflection group from the simple roots and the geometric interpretation for the rank-3 representation in terms of pure quaternions is given. The spinors resulting from successive reflections are found to be  the Lipschitz units, which thus induce a realisation of the rank-4 group $A_1\times A_1 \times A_1\times A_1$ that corresponds precisely to the quaternionic representation.
Section \ref{HGA_A3}  follows the same procedure in the case of $A_3$, with the spinors (the Hurwitz units) inducing the group $D_4$, which is interesting from a triality point of view and central to the equivalence of the Ramond-Neveu-Schwarz and Green-Schwarz strings in superstring theory. 
In Section \ref{HGA_B3}, $B_3$ is found to give rise to the $F_4$ root system, the largest crystallographic group in four dimensions.
$H_3$ (the icosahedral group) and $H_4$ are the largest discrete symmetry groups in three and four dimensions, and are thus the most complex case. In Section \ref{HGA_H3}, we show that the $H_3$ spinors are equivalent to  the icosians, which in turn are the roots of $H_4$.
We finally consider the advantages of a Geometric Algebra framework over conventional purely algebraic considerations (Section \ref{HGA_versor}), before we conclude in Section \ref{HGA_Concl}.

\section{Coxeter Groups}\label{HGA_Cox}

\begin{defn}[Coxeter group] A {Coxeter group} is a group generated by some involutive generators $s_i, s_j \in S$ subject to relations of the form $(s_is_j)^{m_{ij}}=1$ with $m_{ij}=m_{ji}\ge 2$ for $i\ne j$. 
\end{defn}
The  finite Coxeter groups have a geometric representation where the involutions are realised as reflections at hyperplanes through the origin in a Euclidean vector space $\mathcal{E}$ (essentially  just the classical reflection groups). In particular, let $(\cdot \vert \cdot)$ denote the inner product in $\mathcal{E}$, and $\lambda$, $\alpha\in\mathcal{E}$.  
\begin{defn}[Reflections and roots] The generator $s_\alpha$ corresponds to the {reflection}
\begin{equation}\label{reflect}
s_\alpha: \lambda\rightarrow s_\alpha(\lambda)=\lambda - 2\frac{(\lambda|\alpha)}{(\alpha|\alpha)}\alpha
\end{equation}
 at a hyperplane perpendicular to the  {root vector} $\alpha$.
\end{defn}

The action of the Coxeter group is  to permute these root vectors, and its  structure is thus encoded in the collection  $\Phi\in \mathcal{E}$ of all such roots, which form a root system: 
\begin{defn}[Root system] \label{DefRootSys}
{Root systems} satisfy the  two axioms
\begin{enumerate}
\item $\Phi$ only contains a root $\alpha$ and its negative, but no other scalar multiples: $\Phi \cap \mathbb{R}\alpha=\{-\alpha, \alpha\}\,\,\,\,\forall\,\, \alpha \in \Phi$. 
\item $\Phi$ is invariant under all reflections corresponding to vectors in $\Phi$: $s_\alpha\Phi=\Phi \,\,\,\forall\,\, \alpha\in\Phi$.
\end{enumerate}
A subset $\Delta$ of $\Phi$, called {simple roots}, is sufficient to express every element of $\Phi$ via integer linear combinations with coefficients of the same sign. 
\end{defn}

$\Phi$ is therefore  completely characterised by this basis of simple roots, which in turn completely characterises the Coxeter group. The structure of the set of simple roots is encoded in the Cartan matrix, which contains the geometrically invariant information of the root system.
\begin{defn} [Cartan matrix and Coxeter-Dynkin diagram]   For a set of simple roots $\alpha_i\in\Delta$, the matrix defined by
\begin{equation}\label{CM}
	A_{ij}=2(\alpha_i| \alpha_j)/(\alpha_i| \alpha_i)
\end{equation}
is called the  {Cartan matrix}. A graphical representation of the geometric content is given by Coxeter-Dynkin diagrams, in which nodes correspond to simple roots, orthogonal ($\frac{\pi}{2}$) roots are not connected, roots at $\frac{\pi}{3}$ have a simple link, and other angles $\frac{\pi}{m}$ have a link with a label $m$ (see Fig. \ref{fig_rep} for examples). 
\end{defn}

The crystallographic Coxeter groups arise as  Weyl groups in Lie Theory. They satisfy $A_{ij}\in\mathbb{Z}$ and roots are $\mathbb{Z}$-linear combinations of simple roots. However, the root systems of the non-crystallographic Coxeter groups $H_2$, $H_3$ and $H_4$ fulfil $A_{ij}\in\mathbb{Z}[\tau]$, where $\mathbb{Z}[\tau]=\lbrace a+\tau b| a,b \in \mathbb{Z}\rbrace$ is given in terms of the  golden ratio $\tau=\frac{1}{2}(1+\sqrt{5})$. Together with its  Galois conjugate $\sigma=\frac{1}{2}(1-\sqrt{5})$, $\tau$ satisfies the quadratic equation $x^2=x+1$. In this case, all elements of $\Phi$ are  given in terms of $\mathbb{Z}[\tau]$-linear combinations of the simple roots.  These groups do not stabilise lattices in the dimension equal to their rank, but they are important in quasicrystal theory.

\section{Quaternions}\label{HGA_quat}

The Irish mathematician William Rowan Hamilton constructed the quaternion algebra $\mathbb{H}$ in his quest to generalise the advantages of complex numbers for two-dimensional geometry and analysis to three dimensions. His geometric considerations are close in spirit to Geometric Algebra, and have profoundly influenced its development. In pure mathematics, $\mathbb{H}$ is largely deemed interesting because it is one of the only four normed division algebras  $\mathbb{R}$,  $\mathbb{C}$,  $\mathbb{H}$ and  $\mathbb{O}$ (Hurwitz' theorem). 

\begin{defn}[Quaternion Basics]\label{HGA_quatdef}
Real quaternions are defined as $q=q_0+q_ie_i$, ($i=1,2,3$) where  $q_0$ and  $q_i$ are real numbers and the quaternionic imaginary units satisfy 
\begin{equation}\label{HGA_quatunits}
	e_ie_j=-\delta_{ij}+\epsilon_{ijk}e_k, (i,j,k=1,2,3).
\end{equation}
$\delta_{ij}$ and $\epsilon_{ijk}$ are the Kronecker and Levi-Civita symbols, and summation over repeated indices is implied. We will sometimes use the notation $(q_0, q_1, q_2, q_3)$ to represent a quaternion by its quadruplet of components. 
Quaternion conjugation is defined by 
\begin{equation}\label{HGA_quatconj}
	\bar{q}=q_0-q_ie_i,
\end{equation}
which equips the space of quaternions with an inner product and a norm
\begin{equation}\label{HGA_quatnorm}
(p,q)=\frac{1}{2}(\bar{p}q+p\bar{q}),	|q|^2=\bar{q}q=q_0^2+q_1^2+q_2^2+q_3^2.
\end{equation}
\end{defn}
The group of unit quaternions is therefore topologically $S^3$, and isomorphic to $SU(2)=\Spin(3)=\Sp(1)$.
It is a theorem (see, for instance, \cite{Conway03onquaternions}) that every operation in $O(3)$ can be described by one quaternion  $q$ via $x\rightarrow\bar{q}xq$ or $x\rightarrow\bar{q}\bar{x}q$, and every orthogonal transformation in $O(4)$ can be expressed by a pair of quaternions $(p,q)$ via $x\rightarrow{p}xq$ or $x\rightarrow{p}\bar{x}q$. In the literature, the  notation 
\begin{equation}
	[p,q]:x\rightarrow pxq, [p,q]^*:x\rightarrow p\bar{x}q,
\end{equation}
is conventional (see, e.g.  \cite{Koca2010QuasiregularII}), but we shall see how a Clifford algebra approach can  streamline many derivations in applications in Coxeter theory. 

The unit quaternions have a number of discrete subgroups. Group theoretically, these are the preimages of the rotational polyhedral groups (which are discrete subgroups of $SO(3)$) under the covering homomorphism from $\Spin(3)$ (which is a $\mathbb{Z}_2$-bundle over $SO(3)$). They are the binary polyhedral groups, as opposed to the full polyhedral reflection groups, which are preimages from $O(3)=SO(3)\times\mathbb{Z}_2$ (globally), and which are thus Coxeter groups.  The discrete quaternion groups therefore give concrete realisations  of the binary polyhedral groups. Here, we list the discrete groups that will be relevant later. 

\begin{defn}[Lipschitz units]\label{HGA_def_Lip}
The 8 quaternions  of the form $$(\pm 1, 0, 0 ,0) \text{ and permutations }$$ are called the Lipschitz units, and under quaternion multiplication form a realisation of the quaternion group in 8 elements (there is no polyhedron with this symmetry group). 
\end{defn}

\begin{defn}[Hurwitz units]\label{HGA_def_Hurwitz}
The 8 Lipschitz units together with the 16 unit quaternions of the form $$\frac{1}{2}(\pm 1,\pm 1,\pm 1,\pm 1)$$ are called the Hurwitz units, and realise the binary tetrahedral group of order 24. Together with the 24 `dual' quaternions of the form $$\frac{1}{\sqrt{2}}(\pm 1,\pm 1,0,0),$$ they form a group isomorphic to the binary octahedral group of order 48.
\end{defn}

\begin{defn}[Icosians]\label{HGA_def_icosian}
The 24 Hurwitz units together with the 96 unit quaternions of the form $$(0, \pm \tau, \pm 1, \pm \sigma) \text{ and even permutations,}$$ are called the Icosians. The icosian group is isomorphic to the binary icosahedral group with 120 elements. 
\end{defn}

It has been noticed that a number of root systems relevant in Coxeter group and Lie theory can be realised in terms of quaternions. Most notably, there exists an isomorphism between the 240 roots of $E_8$ and the 120 icosians and their $\tau$-multiples \cite{Wilson1986E8, Tits1980Quaternions,  MoodyPatera:1993b, Koca1989E8}. 
Koca et al. 
 have used quaternionic representations also of rank-3 and rank-4 Coxeter groups  in a large number of papers \cite{Koca:1998, Koca2003A4B4F4,Koca2006H4,Koca2007Polyhedra, Koca2010Catalan, Koca2010QuasiregularII, Koca2011Chiral}. In Fig. \ref{fig_rep} we summarise the representations of the rank-3 groups $A_1\times A_1\times A_1$, $A_3$, $B_3$ and $H_3$, as well as the rank-4 groups $D_4$, $F_4$ and $H_4$ (the extension to $A_1\times A_1\times A_1\times A_1$ is trivial). Their interests ranged from string theory to quasicrystals, and most recently to the constructions of 4-polytopes, and 3-polytopes via projection from 4-polytopes. They employ the quaternionic formalism throughout. Here, we show that their algebraically involved manipulations in fact have very straightforward geometric interpretations in terms of Geometric Algebra, which simplifies calculations, suggests a more streamlined notation and exposes the geometry at each step. 
\begin{figure}

	\includegraphics[width=12cm]{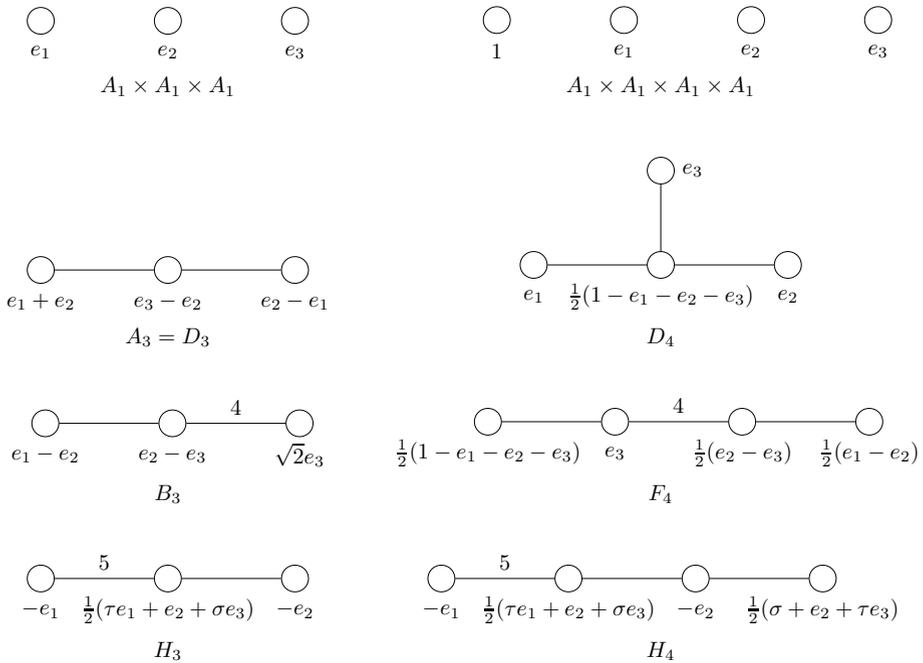}
  \caption[Hi]{Quaternionic root systems for the Coxeter groups
$A_1\times A_1\times A_1$ (as taken from  \cite{Koca2011Chiral}),
$A_3$ \cite{Koca2010QuasiregularII},
$B_3$ \cite{Koca2010QuasiregularII},
$H_3$ \cite{Koca2010QuasiregularII} of rank 3, as well as the groups
$A_1\times A_1\times A_1\times A_1$,
$D_4$ \cite{Koca2006F4},
$F_4$ \cite{Koca2006F4},
$H_4$ \cite{Koca2011Branching} of rank 4.
 }
\label{fig_rep}

\end{figure}

\section{Geometric Algebra}\label{HGA_GA}

The study of Clifford algebras and Geometric Algebra originated with Grassmann's, Hamilton's and Clifford's geometric work \cite{Grassmann1844LinealeAusdehnungslehre, Hamilton1844, Clifford1878}. However, the geometric content of the algebras was soon lost when interesting algebraic properties were discovered in mathematics, and Gibbs advocated the use of vector calculus and quaternions in physics. When Clifford algebras resurfaced in physics in the context of quantum mechanics, it was purely for their algebraic properties, and this continues in particle physics to this day. Thus, it is widely thought that Clifford algebras are somehow intrinsically quantum mechanical in nature. The original geometric meaning of Clifford algebras has been revived in the work of David Hestenes \cite{Hestenes1966STA, Hestenes1985CAtoGC, Hestenes1990NewFound}. Here, we follow an exposition along the lines of \cite{LasenbyDoran2003GeometricAlgebra}.

In a manner reminiscent of complex numbers carrying both real and imaginary parts in the same algebraic entity, one can consider the 
geometric product of two vectors defined as the sum of their scalar (inner/symmetric) product and  wedge (outer/exterior/antisymmetric) product
\begin{equation}\label{in2GP}
    ab\equiv a\cdot b + a\wedge b.
\end{equation}
The wedge product is the outer product introduced by Grassmann, as an antisymmetric product of two vectors. It is the origin of the anticommuting Grassmann `numbers', which are essential for the treatment of fermions in particle physics, from the Standard Model to supergravity and string theory. Two vectors naturally define a plane, and the wedge product is precisely a plane segment of a certain size (parallelogram or other) and orientation (a bivector). The notion of a cross-product resulting in a vector perpendicular to the plane is less fundamental, as it essentially relies on Hodge duality, and only works in three dimensions, so we concentrate here on the more general wedge product (essentially the exterior product in the language of differential forms). Unlike the constituent inner and outer products, the geometric product is invertible, as $a^{-1}$ is simply given by $a^{-1}=a/(a^2)$. This leads to many algebraic simplifications over standard vector space techniques, and also feeds through to the differential structure of the theory, with Green's function methods that are not achievable with vector calculus methods.

This geometric product can then be extended to the product of more vectors via associativity and distributivity, resulting in higher grade objects called multivectors. Multiplying a vector with itself yields a scalar, so there is a highest grade object -- called the pseudoscalar and commonly denoted by $I$ -- that one can form in the algebra, as the number of elements is limited by the number $n$ of linearly independent vectors available. The pseudoscalar is  the product of $n$ vectors, and is therefore called a grade $n$ multivector (it generalises the volume form of the exterior algebra, of which Clifford algebras are deformations). There are a total of $2^n$ elements in the algebra, which is graded into multivectors of grade $r$, depending on the number $r$ of constituent vectors.

For illustration purposes let us consider the Geometric Algebra of the plane $\Cl(2)$, generated by two orthogonal unit vectors $e_1$ and $e_2$. The outer product of two vectors $a=a_1e_1+a_2e_2$ and $b=b_1e_1+b_2e_2$ is then
\begin{equation}\label{in22DOP}
   a\wedge b = \left(a_1b_2-b_1a_2\right)e_1\wedge e_2,
\end{equation}
which would be precisely the $z$-component of the vector (cross) product in three dimensions. Due to the orthonormality we have $e_1\wedge e_2=e_1 e_2=-e_2e_1\equiv I$, i.e. orthogonal vectors anticommute. Moreover, we have the remarkable fact that $I^2=e_1e_2e_1e_2=-e_1e_2^2e_1=-1$, which is perhaps surprising when one is used to dealing with uninterpreted unit scalar imaginaries $i$. Alternatively, writing $a$ and $b$ as complex numbers $a=a_1+ia_2$ and generalising the norm $a^*a$ to the complex product $a^*b$, we have 
\begin{equation}\label{in22DCP}
  a^*b=(a_1b_1+a_2b_2)+(a_1b_2-a_2b_1)i,
\end{equation}
which just recovers the inner and outer products of the vectors, and we can in fact now identify the uninterpreted $i$ with (as?) the Geometric Algebra bivector (and pseudoscalar) $I$.

The Geometric Algebra of three dimensions $\Cl(3)$ is very similar, as there are now three planes like the one above, all  described by bivectors $e_1e_2$, $e_2e_3$ and $e_3e_1$ that square to $-1$. The new highest grade object is the trivector (and pseudoscalar) $e_1e_2e_3$, which also squares to $-1$. It might come as a surprise that the algebra obeyed by the three orthonormal vectors $e_1$, $e_2$, $e_3$ is precisely that of the Pauli algebra, with Pauli matrices $\sigma_i$. This is often thought of as something inherently quantum mechanical, but can in fact be seen as an instance of a matrix representation shrowding the geometric content of an equation. The three unit vectors $e_i$ are often therefore called $\sigma_i$, but are thought of as orthogonal unit vectors rather than matrices, which makes the algebra of three dimensions
\begin{equation}\label{in2PA}
  \underbrace{\{1\}}_{\text{1 scalar}} \,\,\ \,\,\,\underbrace{\{\sigma_1, \sigma_2, \sigma_3\}}_{\text{3 vectors}} \,\,\, \,\,\, \underbrace{\{\sigma_1\sigma_2=I\sigma_3, \sigma_2\sigma_3=I\sigma_1, \sigma_3\sigma_1=I\sigma_2\}}_{\text{3 bivectors}} \,\,\, \,\,\, \underbrace{\{I\equiv\sigma_1\sigma_2\sigma_3\}}_{\text{1 trivector}}.
\end{equation}

The geometric product provides a very compact and efficient way of handling reflections in any number of dimensions, and thus by the Cartan-Dieudonn\'e theorem also rotations. Given a unit vector $n$, we can consider the reflection of a vector $a$ in the hyperplane orthogonal to $n$.
\begin{prop}[Reflections]\label{HGA_refl}
In Geometric Algebra, a vector "$a$" transforms under a reflection in the (hyper-)plane defined by a unit normal vector "$n$" as
	\begin{equation}\label{in2refl}
	  a'=-nan.
	\end{equation}
\end{prop}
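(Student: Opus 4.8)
The plan is to verify the formula by splitting $a$ into the part parallel to the unit normal $n$ and the part lying in the reflecting hyperplane, and then checking that the map $a\mapsto -nan$ reverses the former while leaving the latter fixed. This is exactly the defining behaviour of a reflection in the hyperplane orthogonal to $n$, so matching the two components suffices.

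First I would write $a = a_\parallel + a_\perp$, with $a_\parallel = (a\cdot n)n$ the projection of $a$ onto $n$ and $a_\perp = a - a_\parallel$ the in-hyperplane component, so that $a_\perp \cdot n = 0$. The entire argument then rests on the unit condition $n^2 = n\cdot n = 1$ together with the defining decomposition of the geometric product in \eqref{in2GP}. For the parallel part, since $a_\parallel$ is a scalar multiple of $n$, I would compute $-n a_\parallel n = -(a\cdot n)\,n\,n\,n = -(a\cdot n)\,n^2 n = -(a\cdot n)n = -a_\parallel$, using $n^2 = 1$. For the perpendicular part, orthogonality $a_\perp\cdot n = 0$ makes the geometric product collapse onto the wedge alone, giving $n a_\perp = n\wedge a_\perp = -a_\perp\wedge n = -a_\perp n$; that is, $n$ anticommutes with any vector in the reflecting hyperplane. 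Hence $-n a_\perp n = a_\perp\, n\, n = a_\perp n^2 = a_\perp$. Summing the two contributions yields $-nan = -a_\parallel + a_\perp$, precisely the reflected vector.

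As a consistency check I would compare with the abstract Coxeter reflection \eqref{reflect}: taking $\alpha = n$ so that $(\alpha\vert\alpha) = 1$, that formula reduces to $s_n(a) = a - 2(a\cdot n)n = a - 2a_\parallel = a_\perp - a_\parallel$, in agreement with the Geometric Algebra expression. This confirms that \eqref{in2refl} reproduces the standard reflection and not some rescaling of it.

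I do not anticipate a genuine obstacle, as the statement is ultimately a short direct computation. The only step requiring real care is the anticommutation relation $n a_\perp = -a_\perp n$, which is where the specifically \emph{geometric} content enters: orthogonal vectors anticommute exactly because their symmetric (inner) product vanishes, leaving only the antisymmetric bivector term. Keeping the parallel/perpendicular split clean, and invoking $n^2=1$ at the right places, is what turns the algebra of the geometric product into the transparent sign flip characterising a reflection.
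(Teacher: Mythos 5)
Your proof is correct. The paper itself states this proposition without proof (treating it as a standard fact of Geometric Algebra from the cited literature), and your argument --- splitting $a$ into $a_\parallel=(a\cdot n)n$ and $a_\perp$, using $n^2=1$ and the anticommutation $na_\perp=-a_\perp n$ of orthogonal vectors --- is precisely the canonical derivation that the paper implicitly relies on, so there is nothing to correct or compare.
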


This is a remarkably compact and simple prescription for reflecting vectors in hyperplanes. More generally, higher grade multivectors transform similarly (`covariantly'), as $M= ab\dots c\rightarrow \pm nannbn\dots ncn=\pm nab\dots cn=\pm nMn$, where the $\pm$-sign defines the parity of the multivector. Even more importantly, from the  Cartan-Dieudonn\'e theorem, rotations are the product of successive reflections. For instance, compounding the reflections in the hyperplanes defined by the unit vectors $n$ and $m$ results in a rotation in the plane defined by $n\wedge m$.
\begin{prop}[Rotations]\label{HGA_rot}
In Geometric Algebra, a vector "$a$" transforms under a rotation in the plane defined by $n\wedge m$ via successive reflection in hyperplanes determined by the unit vectors "$n$" and "$m$" as 
	\begin{equation}\label{in2rot}
	  a''=mnanm=: Ra\tilde{R},
	\end{equation}
where we have defined $R=mn$ and the tilde denotes the reversal of the order of the constituent vectors $\tilde{R}=nm$.
\end{prop}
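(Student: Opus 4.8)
The plan is to read off the rotation formula \eqref{in2rot} by simply iterating the single-reflection prescription of Proposition~\ref{HGA_refl}, and then to check that the resulting operator genuinely deserves to be called a rotation in the plane $n\wedge m$.

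First I would reflect $a$ in the hyperplane with unit normal $n$, which by \eqref{in2refl} produces the intermediate vector $a'=-nan$. Reflecting this in turn in the hyperplane with unit normal $m$ gives
\begin{equation*}
a''=-ma'm=-m(-nan)m=mnanm.
\end{equation*}
Introducing $R=mn$ and observing that its reversal is $\tilde{R}=nm$, this is precisely $a''=Ra\tilde{R}$, establishing the stated identity at the purely algebraic level.

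Next I would verify that $R$ is a unit rotor, so that the map is an honest isometry fixing the origin. Since $n$ and $m$ are unit vectors we have $n^2=m^2=1$, and therefore
\begin{equation*}
R\tilde{R}=(mn)(nm)=m\,n^2\,m=m^2=1.
\end{equation*}
Thus $a\mapsto Ra\tilde{R}$ preserves the norm and, being an even product of two reflections, is orientation-preserving; by the Cartan--Dieudonn\'e theorem it is a rotation.

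The most delicate step -- more a matter of careful bookkeeping than a genuine obstacle -- is to confirm that this rotation lies in the plane $n\wedge m$ and to identify its angle. Here I would split the geometric product into its scalar and bivector parts, $R=mn=m\cdot n+m\wedge n=\cos\theta+B\sin\theta$, where $\theta$ is the angle between $n$ and $m$ and $B=(m\wedge n)/|m\wedge n|$ is the unit bivector of their common plane. A vector orthogonal to this plane anticommutes with both $m$ and $n$, hence commutes with $R$, so that $Ra\tilde{R}=aR\tilde{R}=a$ and such vectors are fixed; a vector lying in the plane anticommutes with $B$, and a short computation using $B^2=-1$ shows that $Ra\tilde{R}$ rotates it through the angle $2\theta$. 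This matches the geometric expectation that composing reflections in two hyperplanes inclined at angle $\theta$ yields a rotation by $2\theta$ in the plane they determine.
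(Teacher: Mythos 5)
Your proposal is correct, and its core is exactly the argument the paper intends: Proposition~\ref{HGA_rot} is stated there as background, justified only by the preceding remark that, via the Cartan--Dieudonn\'e theorem, compounding the two reflections $a\mapsto -nan$ and then $a\mapsto -mam$ gives $a''=mnanm=Ra\tilde{R}$ --- precisely your first computation. Where you go beyond the paper is in the verification stage: checking $R\tilde{R}=1$, showing that vectors orthogonal to $n\wedge m$ commute with $R$ and are therefore fixed, and decomposing $R=\cos\theta+B\sin\theta$ to extract the rotation angle $2\theta$ for vectors in the plane. These checks are sound (the anticommutation bookkeeping is right, and $Ra\tilde{R}=R^2a=(\cos 2\theta+B\sin 2\theta)a$ for in-plane $a$), and they supply the ``rotation in the plane $n\wedge m$'' content that the paper asserts but never verifies; the only cosmetic caveat is the degenerate case $m=\pm n$, where $B$ is undefined and $R=\pm 1$ acts trivially.
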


\begin{defn}[Rotors and spinors]\label{HGA_def_Rotor}
The object $R=mn$ generating the rotation in Eq. (\ref{in2rot}) is called a rotor. It satisfies $\tilde{R}R=R\tilde{R}=1$. Rotors themselves transform single-sidedly under further rotations, and thus form a multiplicative group under the geometric product, called the rotor group, which is essentially the Spin group, and thus a double-cover of the orthogonal group. Objects in Geometric Algebra that transform single-sidedly are called spinors, so that rotors are  normalised spinors. 
\end{defn}
Higher multivectors transform in the above covariant, double-sided way as $ MN\rightarrow (RM\tilde{R})(R N \tilde{R})=RM\tilde{R}R N \tilde{R}=R(MN)\tilde{R}$. In fact, the above two cases are examples of a more general theorem on the Geometric Algebra representation of orthogonal transformations. We begin with a definition:
\begin{defn}[Versor]\label{HGA_def_Versor}
In analogy to the vectors and rotors above, we define a versor as a multivector $A=a_1a_2\dots a_k$ which is the product of $k$ non-null vectors $a_i$ ($a_i^2\ne 0$). These versors also form a multiplicative group under the geometric product, called the versor group, where inverses are given by $\tilde{A}$, scaled by the magnitude $|A|^2:=|a_1|^2|a_2|^2\dots|a_k|^2=\pm A \tilde{A}$, where the sign depends on the signature of the space. The inverse is simply $A^{-1}=\pm\frac{\tilde{A}}{|A|^2}$.
\end{defn}

\begin{thm}[Versor Theorem \cite{Hestenes1990NewFound}]\label{HGA_versor_thrm}
Every orthogonal transformation $\underbar{A}$ of a vector $a$ can be expressed in the canonical form
\begin{equation}\label{in2versor}
\underbar{A}: a\rightarrow  a'=\underbar{A}(a)=\pm A^{-1}aA=\pm\frac{\tilde{A}aA}{|A|^2}.
\end{equation}
Unit versors are double-valued representations of the respective orthogonal transformation, and the $\pm$-sign defines its parity. Even versors form a double covering of the special orthogonal group, called the Spin group.
\end{thm}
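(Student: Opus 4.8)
The plan is to reduce the statement to the Cartan--Dieudonn\'e theorem, which carries all the genuine mathematical weight, and then let the Geometric Algebra reflection formula perform the bookkeeping. Cartan--Dieudonn\'e guarantees that any orthogonal transformation $\underbar{A}$ of an $n$-dimensional space is a composition of at most $n$ reflections in hyperplanes through the origin, so it suffices to show that an arbitrary composition of such reflections takes the claimed sandwich form (and, conversely, that every such sandwich is orthogonal). The first small step is to record the reflection formula for a non-unit normal: Proposition \ref{HGA_refl} gives $a\mapsto -nan$ for a unit normal $n$, and for a general non-null $a_i$ I would set $n=a_i/|a_i|$ and note that, since $a_i^{-1}=a_i/a_i^2$ is parallel to $a_i$, one has $-nan=-a_i^{-1}aa_i=-a_ia a_i^{-1}$; thus a single reflection in the hyperplane orthogonal to $a_i$ is the map $a\mapsto -a_i^{-1}aa_i$.

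Next I would compose $k$ such reflections, in the hyperplanes orthogonal to $a_1,\dots,a_k$. Applying them in turn and collecting the minus signs yields, by a one-line induction,
\begin{equation*}
a\;\longmapsto\;(-1)^k\,a_k^{-1}\cdots a_1^{-1}\,a\,a_1\cdots a_k\;=\;(-1)^k A^{-1}aA,
\end{equation*}
where $A:=a_1a_2\cdots a_k$ is a versor and I have used $a_k^{-1}\cdots a_1^{-1}=(a_1\cdots a_k)^{-1}$. This is already the canonical form $a'=\pm A^{-1}aA$, the sign being $(-1)^k$, i.e.\ the parity of the versor. To reach the second expression I would substitute the inverse formula $A^{-1}=\pm\tilde{A}/|A|^2$ from Definition \ref{HGA_def_Versor}, obtaining $a'=\pm\tilde{A}aA/|A|^2$; the extra sign here is merely the signature-dependent convention $\pm A\tilde{A}=|A|^2$, which is innocuous in the Euclidean case.

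Finally I would close the logical loop and dispatch the group-theoretic addenda. Conversely, each factor $-a_i^{-1}(\,\cdot\,)a_i$ is manifestly a grade-preserving isometry, so any sandwich $\pm A^{-1}(\,\cdot\,)A$ is orthogonal, giving the ``canonical form for \emph{every} orthogonal transformation'' equivalence. An even number of reflections produces the $+$ sign and hence orientation-preserving maps, identifying the even versors with a cover of the special orthogonal group, while counting reflections modulo two yields the parity statement. For the double-valuedness I would observe that $A$ and any nonzero scalar multiple $\lambda A$ induce the same transformation, so normalising to unit versors still leaves the ambiguity $A\leftrightarrow -A$; hence the unit (even) versors form a two-to-one cover, the $\Spin$ group.

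The main obstacle is less a computation than a matter of correctly locating the content: essentially all the force resides in the cited Cartan--Dieudonn\'e theorem, and the Geometric Algebra contribution is a transparent rewriting. The only genuinely delicate points are keeping the signature-dependent signs in $A^{-1}=\pm\tilde{A}/|A|^2$ consistent when the metric is indefinite, and verifying that the covering is exactly two-to-one---that is, that the scalars are the only versors acting trivially by conjugation---rather than having a larger kernel.
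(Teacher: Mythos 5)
The paper never proves this theorem: it is imported as background directly from Hestenes \cite{Hestenes1990NewFound}, and the surrounding text only instantiates it (vectors as grade-1 versors reproduce Proposition \ref{HGA_refl}, rotors as grade-2 versors reproduce Proposition \ref{HGA_rot}). So there is no internal proof to compare against, and your argument has to be judged against the standard literature proof --- with which it essentially coincides. The reduction to Cartan--Dieudonn\'e, the rescaled reflection formula $a\mapsto -a_i^{-1}aa_i$ for non-null (not necessarily unit) normals, the induction assembling $k$ reflections into $(-1)^k A^{-1}aA$ with $A=a_1\cdots a_k$, the substitution $A^{-1}=\pm\tilde{A}/|A|^2$, and the converse observation that every such sandwich is an isometry: this is exactly the classical derivation, and it is sound. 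Your sign and ordering conventions are consistent with the paper's (for unit vectors your $A^{-1}aA$ with $A=nm$ is the paper's $mnanm$ of Eq. (\ref{in2rot})).

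The one part of the statement your proposal does not actually establish is the exactness of the double cover. You correctly reduce ``unit versors are double-valued representations'' and ``even versors form a double covering of $SO(n)$'' to the claim that the only versors acting trivially in the sandwich representation are the nonzero scalars (so that two unit versors inducing the same transformation differ exactly by $\pm 1$), but you then leave this as a flagged open point rather than proving it. The lemma is standard and not hard in the Euclidean setting relevant here: splitting a putative kernel element into even and odd parts, the even part must commute with every vector and the odd part anticommute with every vector; expanding in an orthonormal basis forces the even part to be a scalar and the odd part to vanish. Since this kernel computation is precisely what separates ``two-to-one'' from ``many-to-one,'' it should be spelled out (or explicitly cited) for the covering clauses of the theorem to count as proved; with that single lemma added, your proof is complete and is, in substance, the proof Hestenes gives.
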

The versor realisation of the orthogonal group is much simpler than conventional matrix approaches, in particular in the Conformal Geometric Algebra setup, where one uses the fact that the conformal group   $C(p,q)$ is homomorphic to $SO(p+1,q+1)$ to treat translations as well as rotations in a unified versor framework \cite{HestenesSobczyk1984,Lasenby2004Acovariantapproach}. 

The quaternions and their multiplication laws in fact arise naturally as spinors in the Clifford algebra of three dimensions.

\begin{prop}[Quaternions as spinors of $\Cl(3)$]\label{HGA_quatBV}
The unit spinors $\lbrace 1;I\sigma_1; I\sigma_2; I\sigma_3\rbrace$ of $\Cl(3)$ are isomorphic to the quaternion algebra $\mathbb{H}$. 
\end{prop}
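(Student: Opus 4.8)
The plan is to write down an explicit correspondence between the quaternionic imaginary units and the bivectors of $\Cl(3)$, and then to verify that it intertwines the two multiplications. Since $\mathbb{H}$ is presented by generators $e_1,e_2,e_3$ subject only to the relations (\ref{HGA_quatunits}), it suffices to send $1\mapsto 1$ and each $e_i$ to a candidate bivector, confirm that the images obey exactly the same relations, and observe that the assignment carries a basis to a basis. The four unit spinors $\{1; I\sigma_1; I\sigma_2; I\sigma_3\}$ are precisely the scalar and the three bivectors spanning the even subalgebra $\Cl^+(3)$, so any such map is automatically a linear isomorphism of the underlying vector spaces; the whole content is that it respects products.

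First I would isolate the two structural properties of the pseudoscalar $I=\sigma_1\sigma_2\sigma_3$ that do all the work. In three dimensions $I$ is central: moving a unit vector $\sigma_i$ past the product of the other two costs two sign changes, so $I\sigma_i=\sigma_i I$. Moreover $I^2=-1$, since anticommuting the orthogonal unit vectors in $\sigma_1\sigma_2\sigma_3\sigma_1\sigma_2\sigma_3$ past one another and applying $\sigma_i^2=1$ collapses the product to $-1$. From centrality together with $\sigma_i^2=1$ it follows immediately that each bivector squares to $-1$, because $(I\sigma_i)^2=I^2\sigma_i^2=-1$, matching $e_i^2=-1$.

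Next I would compute the mixed products. Using centrality, $(I\sigma_i)(I\sigma_j)=I^2\sigma_i\sigma_j=-\sigma_i\sigma_j$, and for $i\neq j$ the bivector identity $\sigma_i\sigma_j=\epsilon_{ijk}I\sigma_k$ (for instance $\sigma_1\sigma_2=I\sigma_3$) yields $(I\sigma_i)(I\sigma_j)=-\epsilon_{ijk}I\sigma_k$. Combining the diagonal and off-diagonal cases, the bivectors obey $(I\sigma_i)(I\sigma_j)=-\delta_{ij}-\epsilon_{ijk}I\sigma_k$, which differs from the quaternion law (\ref{HGA_quatunits}) only in the sign of the Levi-Civita term. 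The remedy, and the one genuine subtlety of the argument, is an orientation choice: setting $e_i:=-I\sigma_i$ rather than $e_i:=I\sigma_i$ reverses that sign, so that $(-I\sigma_i)(-I\sigma_j)=-\delta_{ij}+\epsilon_{ijk}(-I\sigma_k)$, i.e.\ exactly $e_ie_j=-\delta_{ij}+\epsilon_{ijk}e_k$.

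With the generating relations verified, the map $1\mapsto 1$, $e_i\mapsto -I\sigma_i$ extends uniquely to an algebra homomorphism $\mathbb{H}\to\Cl^+(3)$, and since it takes the quaternion basis to the basis of unit spinors it is an isomorphism. The hard part is thus not any computation but pinning down this sign: the naive identification $e_i=I\sigma_i$ reproduces $e_i^2=-1$ yet reverses the cyclic relation $e_1e_2=e_3$, so the orientation-flipped assignment is essential. One could equally absorb the factor by relabelling the orthonormal frame, but it is cleanest to state the isomorphism with the explicit $-1$.
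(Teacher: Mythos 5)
Your proof is correct, and in fact the paper offers no proof of this proposition at all --- it is stated as a known ``accidental'' isomorphism $\Cl^0(3)\sim\mathbb{H}$ --- so your verification supplies something the paper leaves implicit. All the steps check out: centrality of $I$ in $\Cl(3)$, $I^2=-1$, hence $(I\sigma_i)^2=-1$ and $(I\sigma_i)(I\sigma_j)=-\delta_{ij}-\epsilon_{ijk}I\sigma_k$, so the orientation-reversed assignment $e_i\mapsto -I\sigma_i$ is exactly what intertwines this with $e_ie_j=-\delta_{ij}+\epsilon_{ijk}e_k$, and an algebra homomorphism carrying a basis to a basis is an isomorphism. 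The sign subtlety you isolate is genuine and worth stressing: the naive identification $e_i\mapsto I\sigma_i$ is an \emph{anti}-isomorphism (it is reversal on the even subalgebra, i.e.\ quaternion conjugation), not an isomorphism. The paper's later componentwise notation $(a;b,c,d)=a+bI\sigma_1+cI\sigma_2+dI\sigma_3$, matched against $(q_0,q_1,q_2,q_3)$, tacitly uses that naive identification; this does no harm to its results, because the relevant sets (Lipschitz units, Hurwitz units, icosians) are closed under quaternion conjugation, so an anti-isomorphism identifies them as groups just as well --- but as a statement about the algebra $\mathbb{H}$, your explicit minus sign (or, as you note, an odd relabelling of the orthonormal frame) is needed.
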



This correspondence between quaternions and spinors will be crucial for our understanding of rank-4 root systems. We will sometimes use the notation $(a;b,c,d)$ to denote the even-grade components in $\Cl(3)$ in Eq. (\ref{in2PA}), i.e. $(a;b,c,d)=a+bI\sigma_1+cI\sigma_2+dI\sigma_3$, as opposed to the earlier -- essentially equivalent -- notation of $(q_0, q_1, q_2, q_3)$ in $\mathbb{H}$. For the interpretation of the rank-3 root systems, we make the following definition. 

\begin{defn}[Hodge dual]
In analogy with the exterior algebra, we denote multiplication of a multivector with the pseudoscalar $I$ as the \emph{Hodge dual} of the multivector. 
\end{defn}

\begin{prop}[Vectors and pure quaternions]\label{purequat}
	In three dimensions, the Hodge dual of a vector is a pure bivector. A pure bivector corresponds under the  isomorphism in Proposition \ref{HGA_quatBV} to a pure quaternion. 
\end{prop}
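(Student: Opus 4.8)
The plan is to prove the two assertions separately, settling the grade statement first and then tracking it through the isomorphism of Proposition \ref{HGA_quatBV}.

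First I would fix a general vector $a = a_1\sigma_1 + a_2\sigma_2 + a_3\sigma_3$ and compute its Hodge dual $Ia$ directly. Using the identities recorded in Eq. (\ref{in2PA}) --- namely $\sigma_2\sigma_3 = I\sigma_1$, $\sigma_3\sigma_1 = I\sigma_2$ and $\sigma_1\sigma_2 = I\sigma_3$ --- each $I\sigma_i$ is manifestly a grade-$2$ element, so by linearity $Ia = a_1(I\sigma_1) + a_2(I\sigma_2) + a_3(I\sigma_3)$ lies entirely in the bivector subspace and carries no scalar, vector, or trivector part. The conceptual reason, which I would state alongside the computation, is that in the odd-dimensional algebra $\Cl(3)$ the pseudoscalar $I$ is central and satisfies $I^2 = -1$, so left multiplication by $I$ sends grade $r$ to grade $3-r$; a grade-$1$ vector is therefore sent to a grade-$2$ bivector, establishing that the Hodge dual of a vector is a pure bivector.

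Next I would pass to quaternions. A pure bivector is by definition an element of the three-dimensional subspace spanned by $\{I\sigma_1, I\sigma_2, I\sigma_3\}$, with vanishing scalar component. Under the isomorphism of Proposition \ref{HGA_quatBV}, the scalar $1$ is sent to $1 \in \mathbb{H}$, while each unit bivector $I\sigma_i$ is sent to an imaginary unit $e_i$ (up to a sign fixed by orientation). Hence $b_1(I\sigma_1) + b_2(I\sigma_2) + b_3(I\sigma_3)$ maps to $\pm b_1 e_1 \pm b_2 e_2 \pm b_3 e_3$, which has vanishing real part and is therefore a pure quaternion. Combining this with the first step gives the claim.

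The only point requiring care --- and the nearest thing to an obstacle --- is the sign/orientation convention in the identification $I\sigma_i \leftrightarrow e_i$. One checks that $(I\sigma_1)(I\sigma_2) = I^2\sigma_1\sigma_2 = -\sigma_1\sigma_2 = -I\sigma_3$, whereas $e_1 e_2 = e_3$ from Eq. (\ref{HGA_quatunits}), so the isomorphism must absorb a sign (for instance $I\sigma_i \mapsto -e_i$) in order to reproduce the quaternion multiplication law exactly. This sign is immaterial for the present statement, however, since it does not affect whether the image lies in the purely imaginary part of $\mathbb{H}$; I would therefore record the convention in passing but not dwell on it, as the purity of the bivector is preserved under any such identification.
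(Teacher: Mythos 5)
Your proof is correct and follows the only natural route: the paper itself offers no explicit proof of this proposition, treating it as immediate from the definition of the Hodge dual (multiplication by $I$) and the isomorphism of Proposition \ref{HGA_quatBV}, which is exactly what your two steps spell out. Your further observation that the identification $I\sigma_i \leftrightarrow e_i$ must absorb a sign --- since $(I\sigma_1)(I\sigma_2) = I^2\sigma_1\sigma_2 = -I\sigma_3$ while $e_1e_2 = e_3$, so the naive map is an anti-isomorphism --- is accurate, is a genuine subtlety that the paper's statement of Proposition \ref{HGA_quatBV} glosses over, and, as you rightly note, is immaterial to the claim about purity.
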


Having given the necessary background information, we now consider a simple geometric setup and two steps involving straightforward calculations in the Geometric Algebra of space.

\section{A simple example: the case of $A_1\times A_1 \times A_1$}\label{HGA_A13}

The procedure for this and the following sections is simply  to start with three vectors (the simple roots), then consider which other vectors are generated via reflection (the whole root system), and then to compute which rotors are generated by all the reflections.

\begin{prop}[$A_1\times A_1 \times A_1$ and the octahedron]
Take the three 3D \emph{vectors} given by $\alpha_1=(1,0,0)$, $\alpha_2=(0,1,0)$ and $\alpha_3=(0,0,1)$. The reflections in the hyperplanes orthogonal to these vectors via Eq. (\ref{in2refl}) generate further vectors pointing to the 6 vertices of an octahedron. This polyhedron is the root system of the Coxeter group $A_1\times A_1 \times A_1$, and a geometric realisation of this Coxeter group is given by the Geometric Algebra reflections in the vertex vectors.
\end{prop}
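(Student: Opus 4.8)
The plan is to verify the three claims in sequence: that the reflections generate exactly six vectors forming an octahedron, that these constitute the $A_1\times A_1\times A_1$ root system, and that the Geometric Algebra reflections furnish a faithful realisation of the Coxeter group. The computations are elementary because the three simple roots $\alpha_1=(1,0,0)$, $\alpha_2=(0,1,0)$, $\alpha_3=(0,0,1)$ are mutually orthogonal unit vectors, i.e.\ the standard basis $e_1,e_2,e_3$ of $\Cl(3)$.

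First I would apply the reflection formula of Proposition \ref{HGA_refl}, $a'=-nan$, to each simple root in turn. Taking $n=\alpha_i=e_i$ and $a=\alpha_i=e_i$ gives $a'=-e_ie_ie_i=-e_i$ using $e_i^2=1$, so each reflection sends $\alpha_i$ to its negative $-\alpha_i$. Reflecting $\alpha_j$ ($j\ne i$) in the hyperplane orthogonal to $\alpha_i$ gives $-e_ie_je_i=e_ie_ie_j=e_j=\alpha_j$, since orthogonal vectors anticommute ($e_ie_j=-e_je_i$) and $e_i^2=1$; thus distinct roots are fixed by each other's reflection. The orbit of the three simple roots is therefore exactly $\{\pm e_1,\pm e_2,\pm e_3\}$, six vectors pointing to the vertices of a regular octahedron, and no further vectors are generated. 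This simultaneously establishes closure, so I would then check the root-system axioms of Definition \ref{DefRootSys}: axiom (1) holds because the only scalar multiples present are $\pm\alpha$, and axiom (2) holds because the set $\{\pm e_1,\pm e_2,\pm e_3\}$ is manifestly permuted (up to sign) by each reflection, as the calculation above shows.

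To identify the Coxeter group I would compute the Cartan matrix via Eq.\ (\ref{CM}): since $(e_i|e_j)=\delta_{ij}$, we get $A_{ij}=2\delta_{ij}$, the identity matrix scaled by $2$, so the Coxeter-Dynkin diagram has three disconnected nodes. This means all three reflections commute, each has order $2$, and the group they generate is the direct product of three copies of the order-$2$ group, namely $A_1\times A_1\times A_1\cong (\mathbb{Z}_2)^3$. Faithfulness of the Geometric Algebra realisation follows because the map sending the abstract generator $s_{\alpha_i}$ to the versor action $a\mapsto -e_iae_i$ respects the relations: each versor reflection squares to the identity ($-e_i(-e_iae_i)e_i=e_i^2ae_i^2=a$), and reflections in orthogonal hyperplanes commute, matching $(s_is_j)^2=1$.

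I do not anticipate a genuine obstacle here, since orthonormality trivialises every product; the case is presented precisely as a warm-up exhibiting the structural pattern without the arithmetic complications of the later non-orthogonal cases. The only point requiring a little care is the bookkeeping of signs when composing reflections, which is handled cleanly by the anticommutation rule $e_ie_j=-e_je_i$ for $i\ne j$ together with $e_i^2=1$; these two facts reduce every relevant product to a single basis vector and thereby make both the orbit computation and the verification of the group relations immediate.
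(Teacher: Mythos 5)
Your proposal is correct and follows essentially the same route as the paper: the same explicit Clifford-algebra computation (using $e_i^2=1$ and anticommutation) showing each $\alpha_i$ maps to $-\alpha_i$ under its own reflection and is fixed by the others, yielding the orbit $\{\pm e_1,\pm e_2,\pm e_3\}$ as the octahedral vertices. The paper's proof stops at this orbit computation, whereas you additionally verify the root-system axioms, the Cartan matrix, and faithfulness of the realisation — details the paper treats as implicit, and which are correct as you state them.
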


\begin{proof}
The reflection of $\alpha_1$ in $\alpha_2$ is given via \ref{HGA_refl} as $\alpha_1'=-\alpha_2\alpha_1\alpha_2=-\sigma_2\sigma_1\sigma_2=\sigma_2^2\sigma_1=\sigma_1=\alpha_1$. This does not yield a new root, but the reflection of $\alpha_1$ in itself gives $\alpha_1'=-\alpha_1\alpha_1\alpha_1=-\sigma_1^3=-\sigma_1=-\alpha_1$, and analogously for the remaining cases by explicit calculation. The vectors thus generated are $(\pm 1, 0, 0)$ and permutations thereof, which point to the vertices of an octahedron.
\end{proof}

\begin{thm}[Vertex vectors and pure quaternions]
The simple roots of the representation of $A_1\times A_1 \times A_1$ in terms of pure quaternions as shown in Fig. \ref{fig_rep} are simply the Hodge duals of the  vectors $\alpha_1$, $\alpha_2$ and $\alpha_3$.
\end{thm}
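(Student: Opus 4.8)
The plan is to carry out the identification directly, in three short moves: take the Hodge dual of each simple root, re-express the resulting bivector in the basis of Eq.~(\ref{in2PA}), and then read off the corresponding pure quaternion under the isomorphism of Proposition~\ref{HGA_quatBV}. The only genuine work is the bookkeeping of orientation conventions, which I would address at the end.

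First I would recall that the three simple roots are the orthonormal vectors $\alpha_i=\sigma_i$, and that the Hodge dual is multiplication by the pseudoscalar $I=\sigma_1\sigma_2\sigma_3$. A one-line computation (anticommuting the $\sigma_i$ and using $\sigma_i^2=1$) gives $I\sigma_1=\sigma_2\sigma_3$, $I\sigma_2=\sigma_3\sigma_1$, $I\sigma_3=\sigma_1\sigma_2$, exactly the basis bivectors of Eq.~(\ref{in2PA}); by Proposition~\ref{purequat} these are pure bivectors, hence pure quaternions. Next I would invoke Proposition~\ref{HGA_quatBV} to transport them into $\mathbb{H}$ and compare with the quaternionic simple roots $e_1,e_2,e_3$ of $A_1\times A_1\times A_1$ displayed in Fig.~\ref{fig_rep}. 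Since each $I\sigma_i$ squares to $-1$ and the three are mutually orthogonal, the match at the level of the root system (the octahedron $\{\pm e_i\}$) is immediate; what remains is to pin down the individual correspondence $I\sigma_i\leftrightarrow e_i$.

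The main obstacle---really the only subtlety---is the handedness of this correspondence. Computing $(I\sigma_1)(I\sigma_2)=I^2\sigma_1\sigma_2=-I\sigma_3$, whereas the quaternion law (\ref{HGA_quatunits}) gives $e_1e_2=+e_3$; the bivectors therefore satisfy a left-handed multiplication table, and the algebra isomorphism is $e_i\mapsto -I\sigma_i$ rather than $e_i\mapsto I\sigma_i$ (equivalently, one uses the oppositely oriented pseudoscalar). After this sign is fixed, all three multiplication rules and norms agree, and the map is an isomorphism of $\mathbb{H}$ with the even subalgebra. Because an overall sign on a simple root is immaterial for a root system---axiom~(1) of Definition~\ref{DefRootSys} places $\alpha$ and $-\alpha$ on the same footing---this orientation choice does not affect the conclusion: the Hodge duals of $\alpha_1,\alpha_2,\alpha_3$ are precisely the pure-quaternion simple roots of Fig.~\ref{fig_rep}, which is what had to be shown.
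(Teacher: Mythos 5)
Your proposal is correct and follows essentially the same route as the paper's own (one-line) proof: form the Hodge duals $I\alpha_i$, observe they are pure bivectors, and transport them to pure quaternions via Proposition~\ref{HGA_quatBV} to recover the representation in Fig.~\ref{fig_rep}. The only difference is that you additionally pin down the handedness of the identification (checking that $(I\sigma_1)(I\sigma_2)=-I\sigma_3$ forces $e_i\mapsto -I\sigma_i$ rather than $e_i\mapsto I\sigma_i$), a sign subtlety the paper leaves implicit and which, as you correctly note, is immaterial for the root-system statement since $\alpha$ and $-\alpha$ are interchangeable.
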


\begin{proof}
The Hodge duals of the three vectors are the bivectors $I\alpha_1$, $I\alpha_2$ and $I\alpha_3$. These pure bivectors can be mapped to the quaternions as given by Proposition \ref{HGA_quatBV}, which gives  the quaternion representation used in the literature. 
\end{proof}

Having found the full set of reflections and the geometric interpretation of the rank-3 representations in terms of pure quaternions, we now consider the rotations generated in the Geometric Algebra.

\begin{lem}[$A_1^3$ generates the Lipschitz units]
By the Cartan-Dieudonn\'e theorem, combining two reflections yields a rotation, and Proposition \ref{HGA_rot} gives a rotor realisation of these rotations in Geometric Algebra.
The 6  $A_1\times A_1 \times A_1$ Coxeter reflections generate 8 rotors, and 
since  the Spin group $\Spin(3)$ is the universal double cover of $SO(3)$, 
these 8 spinors themselves form a spinor realisation of the quaternion group $Q$ with 8 elements.
The quaternion group constitutes the  roots of the Coxeter group $A_1\times A_1 \times A_1\times A_1$.
Using the quaternion-spinor correspondence Proposition \ref{HGA_quatBV}, these 8 spinors correspond precisely to the 8 Lipschitz units in Definition \ref{HGA_def_Lip}. 
When viewed as spinor generators, the `simple roots' of the  quaternionic representation of $A_1\times A_1 \times A_1\times A_1$ from Fig. \ref{fig_rep} are seen to be sufficient to generate the quaternion group. 
\end{lem}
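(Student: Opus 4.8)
The plan is to compute the rotor group generated by the six reflections directly in $\Cl(3)$, recognise it abstractly as the quaternion group $Q_8$, and then carry it over to the Lipschitz units through the isomorphism of Proposition~\ref{HGA_quatBV}. First I would enumerate the rotors. By Proposition~\ref{HGA_rot} a pair of reflections in unit root vectors $n$ and $m$ yields the rotor $R=mn$. Letting $n,m$ range over the six root vectors $\pm\sigma_1,\pm\sigma_2,\pm\sigma_3$ found above, the geometric product falls into three cases: $n$ and $m$ parallel gives $mn=+1$, antiparallel gives $mn=-1$, and orthogonal gives a unit bivector, since $\sigma_1\sigma_2=I\sigma_3$, $\sigma_2\sigma_3=I\sigma_1$, $\sigma_3\sigma_1=I\sigma_2$ (with a reversed sign when the orientation is reversed). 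Collecting the outcomes and accounting for the sign coincidences — e.g. $(-\sigma_1)(-\sigma_2)=\sigma_1\sigma_2$ whereas $\sigma_2\sigma_1=-\sigma_1\sigma_2$ — the distinct rotors are precisely the eight even-grade elements $\{\pm 1,\pm I\sigma_1,\pm I\sigma_2,\pm I\sigma_3\}$.

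Next I would verify that these eight elements form a group and identify it. Each bivector squares to $-1$, and they satisfy the Hamilton relations up to labelling (for instance $(I\sigma_1)(I\sigma_2)=-I\sigma_3$), so the set is closed under the geometric product with $-1$ a central involution; this is the quaternion group $Q_8$. As a conceptual check on the count, the even (rotation) subgroup of the reflection group $A_1^3$ is the Klein four-group of order $4$ (the identity together with the three $\pi$-rotations about the coordinate axes); since $\Spin(3)\to SO(3)$ is a double cover, its preimage has order $8$, which is exactly why two reflections produce eight rather than four rotors.

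Finally I would apply Proposition~\ref{HGA_quatBV}: the isomorphism sends $1\mapsto 1$ and each unit bivector $I\sigma_i$ to an imaginary unit $\pm e_i$, so the eight rotors map onto $\{\pm 1,\pm e_1,\pm e_2,\pm e_3\}$, which is Definition~\ref{HGA_def_Lip}'s list of Lipschitz units. These eight unit quaternions form four mutually orthogonal antipodal pairs, i.e. the root system of $A_1^4$. To close the last assertion, I would take the simple roots of the $A_1^4$ representation in Fig.~\ref{fig_rep} and multiply them as spinors rather than adding them as roots: already $e_1$ and $e_2$ generate $\{\pm 1,\pm e_1,\pm e_2,\pm e_3\}$ under the geometric product, so the simple roots suffice to generate all of $Q_8$.

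I expect no deep obstacle here — the content is essentially bookkeeping in $\Cl(3)$. The one step that needs genuine care is the enumeration: one must check that the pairwise products give \emph{exactly} eight distinct rotors, neither fewer (through sign collisions) nor more, and that this eight-element set is truly closed, so that the statement ``the six reflections generate eight rotors'' holds literally rather than merely asserting that pairs of reflections land among eight elements. Keeping the sign conventions in $I\sigma_i=\sigma_j\sigma_k$ consistent with the quaternion isomorphism of Proposition~\ref{HGA_quatBV} is the only genuinely delicate bit.
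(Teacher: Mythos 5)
Your proposal is correct and takes essentially the same approach as the paper: both enumerate the pairwise geometric products of the six roots explicitly in $\Cl(3)$ to obtain the eight elements $\{\pm 1,\pm I\sigma_1,\pm I\sigma_2,\pm I\sigma_3\}$, identify them with the Lipschitz units via Proposition~\ref{HGA_quatBV}, and verify that the generators of Fig.~\ref{fig_rep}, read as rotors, close onto the quaternion group. Your added double-cover counting check and the explicit closure verification are slightly more careful than the paper's terse ``explicit calculation of all cases,'' but they do not constitute a different method.
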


\begin{proof}
Form rotors according to $R_{ij}=\alpha_i\alpha_j$, e.g. $R_{11}=\alpha_1^2=1\equiv(1;0,0,0)$, or $R_{23}=\alpha_2\alpha_3=\sigma_2\sigma_3=I\sigma_1\equiv (0;1,0,0)$. Explicit calculation of all cases generates the 8 permutations of $(\pm 1;0,0,0)$. Under the spinor-quaternion correspondence, these are precisely the Lipschitz units. 
The quaternionic generators from Fig. \ref{fig_rep}, when viewed as rotor generators $R_0=1, R_1=I\sigma_1, R_2=I\sigma_2, R_3=I\sigma_3$ generate the group $A_1\times A_1 \times A_1\times A_1$, since, for instance, $R_3R_1=\sigma_1\sigma_2\sigma_2\sigma_3=-R_2$, and analogously for the other negatives. 
\end{proof}

\begin{thm}[$A_1^3$ induces $A_1^4$]
Via the geometric product, the reflections in the Coxeter group $A_1\times A_1 \times A_1$ induce spinors that realise the root system of the Coxeter group $A_1\times A_1 \times A_1\times A_1$.
In particular, the quaternionic representation in Fig. \ref{fig_rep} has a geometric interpretation as spinors mapped to the quaternions using the `accidental' isomorphism $\Cl^0(3)\sim\mathbb{H}$.
\end{thm}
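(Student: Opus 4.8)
The plan is to recognise this statement as a synthesis of results already established, chiefly the preceding Lemma, and to assemble them into the \emph{induction} claim that the Clifford geometric product itself builds the rank-4 structure out of the rank-3 one. First I would invoke Proposition~\ref{HGA_rot}: by the Cartan-Dieudonn\'e theorem every product of two reflection vectors is a rotor, so the geometric product $R_{ij}=\alpha_i\alpha_j$ of any two roots of the $A_1\times A_1\times A_1$ system lands in the even subalgebra $\Cl^0(3)$ and is a spinor. This is precisely the mechanism by which the rank-3 reflections (odd-grade operators) manufacture rank-4 vectors (even-grade spinors): it is the geometric product, and nothing external, that does the inducing.

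Second, I would recall the explicit enumeration carried out in the preceding Lemma: ranging over all pairs $(i,j)$ produces exactly the eight spinors that in the $(a;b,c,d)$ notation are the permutations of $(\pm 1;0,0,0)$, and these close into the quaternion group $Q$. Under the isomorphism $\Cl^0(3)\sim\mathbb{H}$ of Proposition~\ref{HGA_quatBV}, they map bijectively onto the eight Lipschitz units $\pm 1,\pm e_1,\pm e_2,\pm e_3$ of Definition~\ref{HGA_def_Lip}.

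Third, I would verify that this set of eight unit quaternions, now read as eight \emph{vectors} in the four-dimensional real space $\mathbb{H}$, is the root system of $A_1\times A_1\times A_1\times A_1$. The inner product of Eq.~(\ref{HGA_quatnorm}) restricts on the span of $\{1,e_1,e_2,e_3\}$ to the standard Euclidean product, so the eight units fall into the four mutually orthogonal pairs $\{\pm 1\},\{\pm e_1\},\{\pm e_2\},\{\pm e_3\}$; each pair is an $A_1$ subsystem and distinct pairs are orthogonal, which is exactly the data defining $A_1^4$ in the sense of Definition~\ref{DefRootSys}. Comparing with Fig.~\ref{fig_rep} then shows that the quaternionic representation used in the literature coincides element-for-element with these induced spinors, establishing the ``spinors in disguise'' assertion.

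I do not expect a genuine obstacle, since the theorem is the synthesis of already-proved facts; if there is a delicate point it is the root-system identification in the third step. One must confirm that the eight induced spinors partition into four \emph{orthogonal} copies of $A_1$ rather than some other rank-4 configuration. Because the Lipschitz units are manifestly $\pm$ the orthonormal quaternion units, this orthogonality is immediate from Eq.~(\ref{HGA_quatnorm}), so the identification with $A_1^4$ is forced and the induction claim follows.
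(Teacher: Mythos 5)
Your proposal is correct and follows essentially the same route as the paper, whose own proof is literally the single word ``Immediate'' --- deferring entirely to the preceding Lemma, exactly as you do. Your third step, checking that the eight Lipschitz units read as vectors in $\mathbb{H}\cong\mathbb{R}^4$ split into four mutually orthogonal $\pm$ pairs and hence form the $A_1\times A_1\times A_1\times A_1$ root system, is a worthwhile explicit verification of a fact the paper only asserts inside the Lemma statement, but it does not constitute a different approach.
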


\begin{proof}
Immediate. 
\end{proof}

The group $A_1\times A_1 \times A_1$ is the symmetry group of the octahedron, which is the three-dimensional orthoplex, and more generally $A_1^n$ is the symmetry group of the $n$-orthoplex, i.e. for $A_1\times A_1 \times A_1\times A_1$ the 16-cell.

\section{The case of $A_3$}\label{HGA_A3}

Having outlined the approach and the key results, we now turn to mathematically more interesting structures, beginning with $A_3$. $A_3$ is the symmetry group of the tetrahedron (the 3-simplex), and more generally, the series $A_n$ (the symmetric group) is the symmetry group of the $n$-simplex.

\begin{prop}[Cuboctahedron and $A_3$]
Take the three 3D unit {vectors} given by $\sqrt{2}\alpha_1=(1,1,0)$, $\sqrt{2}\alpha_2=(0,-1,1)$ and $\sqrt{2}\alpha_3=(-1,1,0)$. The corresponding reflections  generate further vectors pointing to the 12 vertices of a cuboctahedron. This polyhedron is the root system of the Coxeter group $A_3$, and the GA reflections in the vertex vectors give a geometric realisation.
\end{prop}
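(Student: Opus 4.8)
The plan is to verify the claim by the same explicit-computation strategy used for $A_1\times A_1\times A_1$, but now with the three simple roots $\sqrt{2}\alpha_1=(1,1,0)$, $\sqrt{2}\alpha_2=(0,-1,1)$ and $\sqrt{2}\alpha_3=(-1,1,0)$ of $A_3$. First I would rewrite each simple root in the $\Cl(3)$ notation, i.e. $\alpha_1=\tfrac{1}{\sqrt{2}}(\sigma_1+\sigma_2)$, $\alpha_2=\tfrac{1}{\sqrt{2}}(-\sigma_2+\sigma_3)$ and $\alpha_3=\tfrac{1}{\sqrt{2}}(-\sigma_1+\sigma_2)$, and confirm these are unit vectors ($\alpha_i^2=1$) so that the reflection formula $a'=-nan$ of Proposition \ref{HGA_refl} applies directly. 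I would also record the pairwise inner products, which encode the $A_3$ Coxeter diagram: $(\alpha_1|\alpha_2)=-\tfrac{1}{2}$, $(\alpha_2|\alpha_3)=-\tfrac{1}{2}$ and $(\alpha_1|\alpha_3)=0$, corresponding to angles $\tfrac{\pi}{3}$, $\tfrac{\pi}{3}$ and $\tfrac{\pi}{2}$ respectively — exactly the linear $A_3$ diagram.

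Next I would generate the orbit under the reflections. Using $s_{\alpha_j}(\alpha_i)=-\alpha_j\alpha_i\alpha_j$ and the geometric product rules of \eqref{in2PA}, I would compute the images of the simple roots under repeated reflection. The key observation is that each reflection, being an orthogonal transformation, preserves the norm, so every generated vector has length $1$ and hence, after clearing the $\tfrac{1}{\sqrt{2}}$, corresponds to a vector of the form $\tfrac{1}{\sqrt2}(\pm1,\pm1,0)$ and its coordinate permutations. Since there are $\binom{3}{2}=3$ choices of which coordinate is zero and $4$ sign choices for the remaining two nonzero entries, this yields $3\times4=12$ distinct vectors. I would verify by a finite (and short) enumeration that the reflection group closes on exactly these $12$ vectors and produces no others, and that these $12$ points are the vertices of a cuboctahedron.

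The final geometric identification is that these $12$ vectors constitute the root system of $A_3$: the closure under all $s_\alpha$ gives axiom (2) of Definition \ref{DefRootSys} automatically, while axiom (1) follows since each line $\mathbb{R}\alpha$ meets the set only in $\{\alpha,-\alpha\}$. The fact that the simple roots $\alpha_1,\alpha_2,\alpha_3$ express all $12$ roots as integer linear combinations with consistent signs confirms that $\Delta=\{\alpha_1,\alpha_2,\alpha_3\}$ is a valid simple system, and the Cartan matrix computed from \eqref{CM} is precisely the $A_3$ Cartan matrix. The geometric realisation of the Coxeter group is then given by the $\Cl(3)$ reflections in the vertex vectors, each element $w\in A_3$ being represented by a versor per Theorem \ref{HGA_versor_thrm}.

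I expect the main obstacle to be purely bookkeeping rather than conceptual: one must be careful, when computing products like $-\alpha_2\alpha_1\alpha_2$, to track the cross terms involving the bivectors $\sigma_i\sigma_j$ and the pseudoscalar, since the simple roots are no longer single basis vectors but linear combinations of two. The cleanest route is to exploit the reflection formula's decomposition into parts parallel and perpendicular to the reflecting vector, $s_n(a)=-nan = a-2(a\cdot n)n$, which reduces every reflection to an inner-product computation and sidesteps the bivector algebra entirely; this makes the enumeration routine and the closure on the $12$ cuboctahedron vertices transparent.
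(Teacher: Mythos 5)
Your proposal is correct and takes essentially the same route as the paper: an explicit finite computation of the orbit of the simple roots under the generated reflections, closing on the 12 permutations of $\tfrac{1}{\sqrt{2}}(\pm 1,\pm 1,0)$, i.e.\ the cuboctahedron vertices --- the paper does this by expanding the geometric products $-\alpha_j\alpha_i\alpha_j$ in $\Cl(3)$ directly, and your preference for the equivalent form $s_n(a)=a-2(a\cdot n)n$ is only a computational reorganisation of the same map. One caution: your ``key observation'' that norm preservation forces the generated vectors to have this form is a non-sequitur (there are infinitely many unit vectors; one would also need, e.g., that the orbit lies in the $\mathbb{Z}$-span of the simple roots), but since the finite enumeration you also propose is what actually establishes closure on exactly these 12 vectors, the argument stands.
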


\begin{proof}
For instance, the reflection of $\alpha_1$ in itself gives $\alpha_1'=-\alpha_1\alpha_1\alpha_1=-\alpha_1$. The reflection of $\alpha_1$ in $\alpha_2$ gives  $\alpha_1''=-\alpha_2\alpha_1\alpha_2=-\frac{1}{\sqrt{8}}(\sigma_3-\sigma_2)(\sigma_2+\sigma_1)(\sigma_3-\sigma_2)=-\frac{1}{\sqrt{8}}(\sigma_3\sigma_2\sigma_3-\sigma_3\sigma_2\sigma_2+\sigma_3\sigma_1\sigma_3-\sigma_3\sigma_1\sigma_2-\sigma_2\sigma_2\sigma_3+\sigma_2\sigma_2\sigma_2-\sigma_2\sigma_1\sigma_3+\sigma_2\sigma_1\sigma_2)=\frac{1}{\sqrt{8}}(-\sigma_2-\sigma_3-\sigma_1-\sigma_3+\sigma_2-\sigma_1)=\frac{1}{\sqrt{2}}(-1,0,-1)$, and analogously for the other cases by explicit computation. The vertex vectors thus generated by the reflections are the set of permutations of $\frac{1}{\sqrt{2}}(\pm 1,\pm 1,0)$, which are the 12 vertices of a cuboctahedron.
\end{proof}

\begin{thm}[Vertices and pure quaternions]
The simple roots of the representation of $A_3$ in terms of pure quaternions as shown in Fig. \ref{fig_rep} are simply the Hodge duals of the vectors $\alpha_1$, $\alpha_2$ and $\alpha_3$.
\end{thm}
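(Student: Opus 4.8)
The plan is to follow exactly the route used in the analogous $A_1\times A_1\times A_1$ theorem, since the geometric mechanism is identical and only the explicit vectors change. First I would rewrite each simple root in the $\sigma_i$ basis, namely $\alpha_1=\frac{1}{\sqrt{2}}(\sigma_1+\sigma_2)$, $\alpha_2=\frac{1}{\sqrt{2}}(\sigma_3-\sigma_2)$ and $\alpha_3=\frac{1}{\sqrt{2}}(\sigma_2-\sigma_1)$, and then form their Hodge duals $I\alpha_i$. By Proposition \ref{purequat} each $I\alpha_i$ is a pure bivector, and using the identities $I\sigma_1=\sigma_2\sigma_3$, $I\sigma_2=\sigma_3\sigma_1$, $I\sigma_3=\sigma_1\sigma_2$ read off from Eq. (\ref{in2PA}), these duals are immediately expressed as real linear combinations of the three unit bivectors.

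Next I would invoke the isomorphism of Proposition \ref{HGA_quatBV}, under which the unit bivectors $I\sigma_i$ correspond to the quaternionic imaginary units $e_i$. This sends $I\alpha_1\mapsto\frac{1}{\sqrt{2}}(e_1+e_2)$, $I\alpha_2\mapsto\frac{1}{\sqrt{2}}(e_3-e_2)$ and $I\alpha_3\mapsto\frac{1}{\sqrt{2}}(e_2-e_1)$, i.e. the pure quaternions $\frac{1}{\sqrt{2}}(0,1,1,0)$, $\frac{1}{\sqrt{2}}(0,0,-1,1)$ and $\frac{1}{\sqrt{2}}(0,-1,1,0)$. The final step is simply to compare these three pure quaternions with the simple roots listed for $A_3$ in Fig. \ref{fig_rep} and verify that they coincide, which establishes the claimed geometric interpretation of the quaternionic representation as Hodge-dualised root vectors.

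I do not expect any genuine obstacle here: the argument is a direct computation resting on the vector/pure-quaternion correspondence already established in Propositions \ref{HGA_quatBV} and \ref{purequat}. The one point demanding care is the bookkeeping of signs in the bivector products, since $(I\sigma_1)(I\sigma_2)=I^2\sigma_1\sigma_2=-I\sigma_3$, so the naive identification $I\sigma_i\leftrightarrow e_i$ reproduces the quaternion algebra with the opposite handedness unless one absorbs an overall sign (e.g. by taking $e_i\leftrightarrow\pm I\sigma_i$ consistently). Once a fixed convention is chosen so as to match the particular sign and ordering used in the quoted representation, the images agree with the $A_3$ simple roots in Fig. \ref{fig_rep}. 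Since a root system is determined by its simple roots only up to the orthogonal freedom in the choice of imaginary units, any residual sign or relabelling discrepancy is geometrically immaterial and does not affect the substance of the statement.
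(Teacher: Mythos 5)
Your proposal is correct and takes essentially the same route as the paper: the paper's own proof is just ``As before'', deferring to the $A_1\times A_1\times A_1$ case, where one forms the Hodge duals $I\alpha_i$, notes they are pure bivectors, and maps them to pure quaternions via the isomorphism of Proposition \ref{HGA_quatBV} for comparison with Fig.~\ref{fig_rep} --- exactly your steps. Your explicit bookkeeping of the handedness issue in the identification $e_i\leftrightarrow\pm I\sigma_i$ (since $(I\sigma_1)(I\sigma_2)=-I\sigma_3$) is a refinement the paper leaves implicit, but it does not alter the approach.
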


\begin{proof}
As before.
\end{proof}

\begin{lem}[$A_3$ generates the Hurwitz units]
The 12  $A_3$ Coxeter reflections generate 24 rotors, and 
these 24 spinors themselves form a spinor realisation of the binary tetrahedral group $2T$ with 24 elements.
The binary tetrahedral group constitutes the  roots of the Coxeter group $D_4$.
Under the quaternion-spinor correspondence Proposition \ref{HGA_quatBV}, these 24 spinors correspond precisely to the 24 Hurwitz units in Definition \ref{HGA_def_Hurwitz}. 
When viewed as spinor generators, the `simple roots' of the  quaternionic representation of $D_4$ from Fig. \ref{fig_rep} are seen to be sufficient to generate the binary tetrahedral group. 
\end{lem}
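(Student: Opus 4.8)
The plan is to mirror exactly the template of the $A_1 \times A_1 \times A_1$ lemma: generate the rotors as geometric products of pairs of root vectors, fix their number by a covering argument, identify them with the Hurwitz units through the isomorphism of Proposition \ref{HGA_quatBV}, and finally verify that the quaternionic $D_4$ ``simple roots'' of Fig. \ref{fig_rep} generate the whole group when read as rotors. First I would form the rotors $R_{ij} = \alpha_i\alpha_j$ ranging over the twelve signed cuboctahedron root vectors. By Proposition \ref{HGA_rot} each such product is the rotor of the composite of the two corresponding reflections, and by Definition \ref{HGA_def_Rotor} these rotors lie in a group under the geometric product.

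The cleanest way to obtain the count ``24 rotors'' is structural rather than by enumerating all products. The even subgroup of the reflection group $W(A_3) \cong S_4$ is the rotational tetrahedral group $T \cong A_4$ of order $12$; since in three dimensions every rotation is a product of two reflections (Cartan--Dieudonn\'e) and the two mirrors can be taken from the root system, every element of $T$ is realised by some $R_{ij}$. Because $\Spin(3)$ double-covers $SO(3)$, each rotation lifts to the two rotors $\pm R_{ij}$, and passing between a root and its negative supplies exactly these opposite signs; hence the $R_{ij}$ exhaust a group of order $2 \times 12 = 24$, the binary tetrahedral group $2T$.

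To identify this group with the Hurwitz set I would compute representative products explicitly and expand them in the even-subalgebra basis $\{1, I\sigma_1, I\sigma_2, I\sigma_3\}$, then apply $\Cl^0(3) \cong \mathbb{H}$. I expect the squares $\alpha_i^2 = 1$ together with products of mutually orthogonal roots to reproduce the eight Lipschitz units $(\pm 1, 0, 0, 0)$ and permutations, while products of roots meeting at $\frac{\pi}{3}$ (the simple links of the $A_3$ diagram) produce spinors of the form $\frac{1}{2}(\pm 1, \pm 1, \pm 1, \pm 1)$, giving the remaining sixteen Hurwitz units. A set of the correct cardinality $24$ that is closed under multiplication and consists of Hurwitz units is then necessarily the full $2T$, which by the quaternionic description recorded in Fig. \ref{fig_rep} is precisely the $D_4$ root system.

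For the final assertion I would take the $D_4$ generators of Fig. \ref{fig_rep}, reinterpret each as a rotor under the same correspondence, and confirm by a short multiplication table (exactly as $R_3R_1 = -R_2$ was used in the $A_1^3$ case) that their products close onto all $24$ elements. The main obstacle I anticipate is not the group theory, which is forced by the double-cover count, but the bookkeeping of signs and of the normalisations $\frac{1}{\sqrt{2}}$ and $\frac{1}{2}$ in the explicit step: one must track these carefully through each geometric product to be sure the mapped spinors land exactly on the standard Hurwitz quadruplets rather than on a basis-rotated copy. Since this is entirely dictated by the chosen simple roots $\sqrt{2}\alpha_1 = (1,1,0)$, $\sqrt{2}\alpha_2 = (0,-1,1)$, $\sqrt{2}\alpha_3 = (-1,1,0)$, the care required is precisely the $\sigma_i$-level computation already illustrated in the cuboctahedron proposition.
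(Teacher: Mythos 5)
Your proposal is correct, and its computational core---forming $R_{ij}=\alpha_i\alpha_j$, expanding in the even basis $\lbrace 1, I\sigma_1, I\sigma_2, I\sigma_3\rbrace$, noting that pairs of roots at $\tfrac{\pi}{3}$ give the sixteen spinors $\tfrac{1}{2}(\pm 1;\pm 1,\pm 1,\pm 1)$ while equal or orthogonal pairs give the eight Lipschitz units, and finally checking the Fig.~\ref{fig_rep} generators close under multiplication---is exactly what the paper does (it computes $R_{12}=\tfrac{1}{2}(-1;1,-1,-1)$, $R_{13}=I\sigma_3$, and verifies $R_1R_2=-R_3$, $R_0R_1=\tfrac{1}{2}(1;1,1,-1)$, etc.). Where you genuinely differ is the counting step: the paper obtains the number $24$ purely by enumeration, whereas you derive it structurally from the facts that the even subgroup $W^{+}(A_3)\cong A_4$ has order $12$, that $\Spin(3)$ double-covers $SO(3)$, and that the rotor set is closed under negation because $(-\alpha_i)\alpha_j=-R_{ij}$ and root systems contain negatives. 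This buys a conceptual explanation of the count that transfers verbatim to the $B_3\rightarrow F_4$ and $H_3\rightarrow H_4$ cases (orders $2\times 24=48$ and $2\times 60=120$), which the paper dispatches with ``straightforward, if tedious, calculation''; the paper's approach, in exchange, is entirely self-contained and exhibits every spinor explicitly. One caveat: your appeal to Cartan--Dieudonn\'e to conclude that every rotation in $T$ is realised by some $R_{ij}$ is slightly loose, since Cartan--Dieudonn\'e writes a rotation as a product of two reflections but not automatically reflections whose mirrors are orthogonal to roots of the group. For $A_3$ this is rescued by the elementary fact that every element of $A_4\subset S_4$ is a product of two transpositions (the identity, the $3$-cycles and the double transpositions all arise this way), or alternatively by reading ``generate'' as multiplicative closure of the rotor set, which automatically yields the full preimage of $A_4$ because $-1=\alpha_i(-\alpha_i)$ already lies in the set; with either patch your argument is sound.
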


\begin{proof}
$R_{12}=\alpha_1\alpha_2=\frac{1}{2}(\sigma_1+\sigma_2)(\sigma_3-\sigma_2)=\frac{1}{2}(-I\sigma_2+I\sigma_1-I\sigma_3-1)=\frac{1}{2}(-1;1,-1,-1)$, and in total the  reflections generate the 16 permutations of $\frac{1}{2}(\pm 1;\pm 1,\pm 1,\pm 1)$. Together with the 8 Lipschitz units generated by $R_{13}=\alpha_1\alpha_3=\frac{1}{2}(\sigma_1+\sigma_2)(\sigma_2-\sigma_1)=I\sigma_3=(0;0,0,1)$ and similar calculations, they form the 24 Hurwitz units, which realise the binary tetrahedral group. 
The quaternionic generators from Fig. \ref{fig_rep}, when viewed as rotor generators $R_0=\frac{1}{2}(1;-1,-1,-1), R_1=I\sigma_1, R_2=I\sigma_2, R_3=I\sigma_3$ generate the group $D_4$, since, for instance, $R_1R_2=\sigma_2\sigma_3\sigma_3\sigma_1=-R_3$, and analogously for the other negatives, as well as $R_0R_1=\frac{1}{2}(\sigma_2\sigma_3-\sigma_1\sigma_2\sigma_2\sigma_3-\sigma_2\sigma_3\sigma_2\sigma_3-\sigma_3\sigma_1\sigma_2\sigma_3)=\frac{1}{2}(1;1,1,-1)$, and similar for the other permutations.
\end{proof}

\begin{thm}[$A_3$ induces $D_4$]
Via the geometric product, the reflections in the Coxeter group $A_3$ induce spinors that realise the roots of the Coxeter group $D_4$.
In particular, the quaternions in the representation of $D_4$ in Fig. \ref{fig_rep} have a geometric interpretation as spinors.
\end{thm}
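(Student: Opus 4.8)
The plan is to build directly on the preceding Lemma, which has already carried out the substantive computation: it shows that the twelve $A_3$ reflections, paired via the geometric product, generate exactly $24$ rotors, and that under the spinor--quaternion dictionary of Proposition \ref{HGA_quatBV} these coincide with the $24$ Hurwitz units of Definition \ref{HGA_def_Hurwitz}, which form the binary tetrahedral group $2T$. Granting this, the only thing left to establish is that these $24$ unit quaternions, viewed as points of $\mathbb{R}^4$, constitute the root system of $D_4$, and thereby reproduce the quaternionic $D_4$ representation of Fig.~\ref{fig_rep}.

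First I would record the explicit spinor set from the Lemma: the $16$ quaternions $\frac{1}{2}(\pm 1,\pm 1,\pm 1,\pm 1)$ together with the $8$ permutations of $(\pm 1,0,0,0)$. Reading each quaternion as its component quadruplet in $\mathbb{R}^4$, I would then verify that this set satisfies the two root-system axioms of Definition \ref{DefRootSys}: axiom~(1) is immediate, as all $24$ vectors are unit vectors and the set contains each together with its negative but no further scalar multiple; for axiom~(2) I would check that the reflections in these $24$ vectors permute the set, and that the resulting configuration is $D_4$ specifically would then follow either from identifying the reflection group as $W(D_4)$ or from the inner-product (Cartan-matrix) data of a chosen simple subset. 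The most efficient way to see this is to exhibit a single similarity transformation of $\mathbb{R}^4$ (a rotation composed with the scaling by $\sqrt 2$) carrying these $24$ vectors onto the standard $D_4$ roots $\pm e_i\pm e_j$; equivalently, one recognises both sets as the $24$ vertices of the self-dual $24$-cell, whose vertex set is the $D_4$ root polytope.

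The cleanest route, and the one most faithful to the paper's purpose, is simply to compare the spinor set with the quaternionic $D_4$ representation quoted from Koca et al.\ in Fig.~\ref{fig_rep}: that representation is itself listed as the $24$ Hurwitz units, so the match is a matter of inspection, and the theorem's second assertion -- that those quaternions are spinors in disguise -- is then precisely the content of applying Proposition \ref{HGA_quatBV} to the rotors assembled in the Lemma. The only genuine obstacle is the single geometric fact being leaned on, namely that the $24$ unit Hurwitz quaternions and the $24$ vectors $\pm e_i\pm e_j$ present the \emph{same} $D_4$ root system despite their dissimilar-looking coordinates; pinning down the connecting rotation (or reading off the $D_4$ Cartan matrix from a chosen simple subsystem) is where a little care is needed, while everything else follows from the completed Lemma and the formal isomorphism $\Cl^0(3)\sim\mathbb{H}$.
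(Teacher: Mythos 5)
Your proposal is correct and follows essentially the same route as the paper: the paper gives this theorem no separate proof, treating it (as in the $A_1^3$ case, where the proof reads ``Immediate'') as a direct consequence of the preceding Lemma, which already asserts that the $24$ spinors are the Hurwitz units and that the binary tetrahedral group constitutes the roots of $D_4$, matching the quaternionic representation in Fig.~\ref{fig_rep}. Your extra step --- verifying via the root-system axioms and the rotation--scaling onto $\pm e_i\pm e_j$ (the $24$-cell) that the Hurwitz units genuinely form a $D_4$ root system --- fills in precisely the fact the paper leaves as an unproved assertion inside that Lemma, so it strengthens rather than diverges from the paper's argument.
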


$D_4$  belongs to the $D_n$-series of Coxeter groups; however, it is an accidental low-dimensional `exceptional structure' in the sense that its Dynkin diagram has an outer automorphism group of order 3 (the symmetric group $S_3$), acting by permuting the three legs (see Fig. \ref{fig_rep}). This concept, known as triality, is conventionally thought of as rather mysterious, but it is crucial in string theory, as it allows one to relate the vector representation of $SO(8)$ to the two spinor representations, needed in the proof of equivalence of the Green-Schwarz and Ramond-Neveu-Schwarz strings. Here, the Dynkin diagram symmetry is nicely manifested  in the symmetry of the pure quaternion/bivector legs, and the root system is in fact induced from 3D considerations.

\section{The case of $B_3$}\label{HGA_B3}

The $B_n$-series of Coxeter groups are the hyperoctahedral groups, the symmetry groups of the hyperoctahedra and the hypercubes. Thus, the following case $B_3$ is the symmetry group of the octahedron and the cube. 

\begin{prop}[Cuboctahedron with octahedron and $B_3$]
Take the three 3D {vectors} given by $\alpha_1=\frac{1}{\sqrt{2}}(1,-1,0)$, $\alpha_2=\frac{1}{\sqrt{2}}(0,1,-1)$ and $\alpha_3=(0,0,1)$. The corresponding reflections  generate further vectors pointing to the 18 vertices of a cuboctahedron and an octahedron. This polyhedron is the root system of the Coxeter group $B_3$, and the GA reflections in the vertex vectors give a geometric realisation.
\end{prop}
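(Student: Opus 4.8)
The plan is to reuse the two-step template established for $A_3$: first rewrite the three simple roots in the vector basis of $\Cl(3)$ as $\alpha_1=\frac{1}{\sqrt2}(\sigma_1-\sigma_2)$, $\alpha_2=\frac{1}{\sqrt2}(\sigma_2-\sigma_3)$ and $\alpha_3=\sigma_3$, and then apply the reflection prescription $a'=-nan$ of Proposition \ref{HGA_refl} repeatedly until the generated set of vectors closes. First I would check that each $\alpha_i$ is a unit vector, so that the formula applies without rescaling, and record the three pairwise inner products $(\alpha_1|\alpha_2)=-\frac12$, $(\alpha_2|\alpha_3)=-\frac{1}{\sqrt2}$ and $(\alpha_1|\alpha_3)=0$; these reproduce the angles $\frac{\pi}{3}$, $\frac{\pi}{4}$ and $\frac{\pi}{2}$ and hence the labelled Coxeter diagram of $B_3$, confirming the starting data before any orbit computation.

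The computational core is then a finite orbit calculation, entirely analogous to the cuboctahedron computation in the $A_3$ proof, but now organised around two Weyl orbits rather than one. All three $\alpha_i$ are unit vectors, so after reflection every generated vector is again a unit vector; the eighteen directions nonetheless fall into two geometrically distinct families that reflections cannot interchange — the short-root orbit $(\pm1,0,0)$ forming an octahedron and the long-root orbit $\frac{1}{\sqrt2}(\pm1,\pm1,0)$ forming a cuboctahedron. I would generate the octahedron by reflecting $\alpha_3=\sigma_3$ in $\alpha_2$ to obtain $\sigma_2$, then reflecting that in $\alpha_1$ to obtain $\sigma_1$, so that together with self-reflections all six permutations of $(\pm1,0,0)$ appear. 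The cuboctahedron arises from reflecting $\alpha_1,\alpha_2$ into one another to build their $A_2$ subsystem and then reflecting in the short root $\alpha_3=\sigma_3$, which fixes $\alpha_1$ by orthogonality but sends $\alpha_2=\frac{1}{\sqrt2}(\sigma_2-\sigma_3)$ to $\frac{1}{\sqrt2}(\sigma_2+\sigma_3)$, toggling the relative sign and producing all twelve permutations of $\frac{1}{\sqrt2}(\pm1,\pm1,0)$. These twelve are exactly the $D_3\cong A_3$ long-root subsystem, i.e. the same cuboctahedron met in Section \ref{HGA_A3}, which is a pleasant structural check.

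The main obstacle is not any single reflection — each is a one-line $\Cl(3)$ product of the $\sigma_i$ — but the bookkeeping required to certify closure, which is heavier than in the single-orbit $A_3$ case. Concretely, I must verify that the mixed reflections (a short root reflected in a long-root hyperplane, and conversely) land back inside the eighteen listed directions and never interchange the two orbits, so that the process terminates exactly at the octahedron $\cup$ cuboctahedron rather than leaking into a larger set. Since $B_3$ is a finite Coxeter group the orbit must close in principle, but a direct verification requires exhibiting that the set of eighteen candidate vectors is mapped to itself by $-nan$ for every $n$ in the set. Once that invariance is checked, axiom (2) of Definition \ref{DefRootSys} holds, axiom (1) is immediate from the explicit list, and the reflections in these vertex vectors furnish the claimed geometric realisation of $B_3$; as before the entire argument reduces to elementary anticommutation of the $\sigma_i$, with the coexistence of the two orbits on the unit sphere being the only genuinely new feature relative to $A_3$.
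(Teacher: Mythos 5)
Your proposal is correct and takes essentially the same approach as the paper: explicit $\Cl(3)$ reflection computations organised into the two orbits (the octahedron from the short root $\alpha_3$, the cuboctahedron from the long roots), with the remaining cases and closure checked by direct calculation exactly as the paper does (the paper merely exhibits two sample reflections and declares the rest analogous). One detail to repair when you carry it out: reflecting the $A_2$ hexagon $\pm\alpha_1,\pm\alpha_2,\pm(\alpha_1+\alpha_2)$ in $\sigma_3$ yields only ten of the twelve cuboctahedral vertices, since $\alpha_1$ is fixed by orthogonality; the remaining pair $\pm\frac{1}{\sqrt{2}}(\sigma_1+\sigma_2)$ needs one further reflection, e.g.\ reflecting $\alpha_1$ in the already-generated octahedron vertex $\sigma_2$, which gives $-\sigma_2\alpha_1\sigma_2=\frac{1}{\sqrt{2}}(\sigma_1+\sigma_2)$, after which your closure check goes through as stated.
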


\begin{proof}
As before, reflection of a root in itself trivially yields its negative. The reflection of $\alpha_1$ in $\alpha_2$  generates $\frac{1}{\sqrt{2}}(1,0,-1)$, and further reflections generate the other vertices of a cuboctahedron. $\alpha_3$ is a vertex of an octahedron, and reflection of $\alpha_3$ in $\alpha_2$ results in $\alpha_3'=-\frac{1}{2}(\sigma_2-\sigma_3)\sigma_3(\sigma_2-\sigma_3)= -\frac{1}{2}(-\sigma_3-\sigma_2-\sigma_2+\sigma_3)=(0,1,0)$, and analogously for the other vertices of the octahedron.
\end{proof}

\begin{thm}[Vertices and pure quaternions]
The simple roots of the representation of $B_3$ in terms of pure quaternions as shown in Fig. \ref{fig_rep} are simply the Hodge duals of the vectors $\alpha_1$, $\alpha_2$ and $\alpha_3$. 
\end{thm}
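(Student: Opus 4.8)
The plan is to proceed exactly as in the corresponding theorems for $A_1\times A_1\times A_1$ and $A_3$, since the statement has identical structure: I must show that applying the Hodge dual to the three simple root vectors of $B_3$ reproduces, up to the conventions of the literature, the pure-quaternion simple roots displayed in Fig. \ref{fig_rep}. The only substantive input is Proposition \ref{purequat}, which guarantees that the Hodge dual of any vector in three dimensions is a pure bivector, together with the isomorphism $\Cl^0(3)\sim\mathbb{H}$ of Proposition \ref{HGA_quatBV} that carries such pure bivectors to pure quaternions.

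First I would write the three simple roots of the preceding proposition in the $\sigma_i$ basis, namely $\alpha_1=\frac{1}{\sqrt{2}}(\sigma_1-\sigma_2)$, $\alpha_2=\frac{1}{\sqrt{2}}(\sigma_2-\sigma_3)$ and $\alpha_3=\sigma_3$. Next I would form their Hodge duals by multiplication with the pseudoscalar $I=\sigma_1\sigma_2\sigma_3$ (which is central in three dimensions, so left and right multiplication agree), using the identities $I\sigma_1=\sigma_2\sigma_3$, $I\sigma_2=\sigma_3\sigma_1$ and $I\sigma_3=\sigma_1\sigma_2$ already recorded in Eq. (\ref{in2PA}). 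This yields
\begin{equation}
I\alpha_1=\tfrac{1}{\sqrt{2}}(I\sigma_1-I\sigma_2),\quad I\alpha_2=\tfrac{1}{\sqrt{2}}(I\sigma_2-I\sigma_3),\quad I\alpha_3=I\sigma_3,
\end{equation}
each of which is manifestly a pure bivector and, being the image of a unit vector under the norm-preserving map $I$, a unit bivector.

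Finally I would apply the correspondence of Proposition \ref{HGA_quatBV}, under which $I\sigma_i$ maps to the quaternionic imaginary unit $e_i$ (equivalently the quadruplet notation $(0;\dots)$ with a single $1$ in the $i$-th slot), to obtain the three pure quaternions $\frac{1}{\sqrt{2}}(e_1-e_2)$, $\frac{1}{\sqrt{2}}(e_2-e_3)$ and $e_3$. It then remains only to check that these coincide with the quaternionic simple roots quoted for $B_3$ in Fig. \ref{fig_rep}, which completes the argument.

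I expect the main obstacle to be bookkeeping rather than anything conceptual: reconciling the sign and orientation conventions implicit in the isomorphism of Proposition \ref{HGA_quatBV} with the normalisation and labelling used in Fig. \ref{fig_rep}. In particular, one should note that $(I\sigma_1)(I\sigma_2)=-I\sigma_3$, so the map between $\{I\sigma_i\}$ and $\{e_i\}$ carries an orientation subtlety that must be tracked consistently, exactly as in the $A_1\times A_1\times A_1$ and $A_3$ cases. Once the conventions are aligned the identification is immediate, and since nothing in the argument is special to $B_3$, the proof may legitimately be dispatched with \emph{as before}.
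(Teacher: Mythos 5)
Your proposal is correct and takes essentially the same route as the paper: the paper proves the analogous $A_1\times A_1\times A_1$ statement by forming the Hodge duals $I\alpha_1$, $I\alpha_2$, $I\alpha_3$ and mapping these pure bivectors to pure quaternions via Proposition \ref{HGA_quatBV}, and it tacitly leaves the $B_3$ case to this identical argument (no separate proof is printed). Your explicit evaluation of the duals in the $\sigma_i$ basis and your remark on the orientation convention $(I\sigma_1)(I\sigma_2)=-I\sigma_3$ in the spinor--quaternion isomorphism are consistent with, and somewhat more careful than, the paper's treatment.
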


\begin{lem}[$B_3$ generates the Hurwitz units and their duals]
The 18  $B_3$ Coxeter reflections generate 48 rotors, and 
these 48 spinors themselves form a spinor realisation of the binary octahedral group $2O$ with 48 elements.
The binary octahedral group constitutes the  roots of the Coxeter group $F_4$.
Under the quaternion-spinor correspondence, these 48 spinors correspond precisely to the 24 Hurwitz units and their duals. 
When viewed as spinor generators, the `simple roots' of the  quaternionic representation of $F_4$ from Fig. \ref{fig_rep} are seen to be sufficient to generate the binary octahedral group. 
\end{lem}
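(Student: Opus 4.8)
The plan is to reproduce the template of the two preceding lemmas, now writing the three $B_3$ simple roots in $\Cl(3)$ as $\alpha_1=\frac{1}{\sqrt{2}}(\sigma_1-\sigma_2)$, $\alpha_2=\frac{1}{\sqrt{2}}(\sigma_2-\sigma_3)$ and $\alpha_3=\sigma_3$. First I would form the rotors $R_{ij}=\alpha_i\alpha_j$ by direct multiplication, reducing each product to the even-grade basis $\{1;I\sigma_1,I\sigma_2,I\sigma_3\}$ and reading off the quaternion quadruplet via Proposition \ref{HGA_quatBV}. Two representative products already display the two component patterns that will appear: $R_{12}=\alpha_1\alpha_2=\frac{1}{2}(\sigma_1-\sigma_2)(\sigma_2-\sigma_3)=\frac{1}{2}(-1;1,1,1)$ is a Hurwitz unit, whereas $R_{23}=\alpha_2\alpha_3=\frac{1}{\sqrt{2}}(\sigma_2-\sigma_3)\sigma_3=\frac{1}{\sqrt{2}}(-1;1,0,0)$ is a dual quaternion. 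The structural point is that products among the cuboctahedron roots reproduce the $24$ Hurwitz units exactly as in the $A_3$ case, while products that bring in the octahedron roots $\pm\sigma_i$ supply the $24$ duals of scale $\frac{1}{\sqrt{2}}$.

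Second, I would enumerate all the rotor products and verify that exactly $48$ distinct normalised even multivectors arise: the $24$ Hurwitz units of Definition \ref{HGA_def_Hurwitz} together with the $24$ duals $\frac{1}{\sqrt{2}}(\pm1,\pm1,0,0)$ and permutations. The count is forced by the double cover: the rotational part of $B_3$ is the octahedral group of order $24$, so its preimage in $\Spin(3)$ has order $48=|2O|$. The genuine content is therefore to confirm multiplicative closure, which I would check orbit by orbit: the Hurwitz units already close into $2T$ by the previous lemma, the product of a dual with a Hurwitz unit returns a dual, and the product of two duals lands back among the Hurwitz units, so the $48$ elements do form the binary octahedral group.

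Finally, for the closing sentence I would take the $F_4$ quaternionic simple roots of Fig. \ref{fig_rep}, reinterpret them as rotor generators in $\Cl^0(3)$, and verify by short products --- exactly as in the $A_3$ case --- that they generate all $48$ elements of $2O$. The main obstacle is bookkeeping rather than concept: unlike the $8$- and $24$-element cases, there are now two coexisting spinor patterns, so the completeness and closure argument must track both orbits at once and confirm that their union is stable under the geometric product. Once closure and the count of $48$ are established, the identification with $2O$ and with the $F_4$ root system follows precisely as before.
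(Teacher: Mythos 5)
Your proposal is correct and is essentially the paper's own route: the paper disposes of this lemma with the single line ``By explicit straightforward Geometric Algebra computation as before'', i.e.\ precisely your template of forming the rotors $R_{ij}=\alpha_i\alpha_j$ from the stated simple roots and matching the results against the Hurwitz units and their duals, and your sample products $R_{12}=\tfrac{1}{2}(-1;1,1,1)$ and $R_{23}=\tfrac{1}{\sqrt{2}}(-1;1,0,0)$ check out. Your two organisational additions --- the $\Spin(3)$ double-cover count forcing $48$ elements, and the coset bookkeeping in which the Hurwitz units close into $2T$ while the duals form the nontrivial coset --- go beyond what the paper writes down but only streamline the same enumeration; the one slip is your structural remark that products ``bringing in'' octahedron roots supply the duals, since a product of two octahedron roots such as $\sigma_1\sigma_2=I\sigma_3$ is a Lipschitz (hence Hurwitz) unit, and it is only the mixed cuboctahedron--octahedron products that yield the $24$ duals of scale $\tfrac{1}{\sqrt{2}}$.
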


\begin{proof}
By explicit straightforward Geometric Algebra computation as before. 
\end{proof}

\begin{thm}[$B_3$ induces $F_4$]
Via the geometric product, the reflections in the Coxeter group $B_3$ induce spinors that realise the root system of the Coxeter group $F_4$.
In particular, the quaternions in the representation of $F_4$ in Fig. \ref{fig_rep} have a geometric interpretation as spinors.
\end{thm}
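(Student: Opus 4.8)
The plan is to obtain the theorem as an essentially immediate consequence of the preceding Lemma, in exact parallel with the statements "$A_1^3$ induces $A_1^4$" and "$A_3$ induces $D_4$": the computational work has already been discharged, and what remains is to match two sets of $48$ objects. The Lemma established that forming the rotors $R_{ij}=\alpha_i\alpha_j$ from the $18$ reflections of $B_3$ produces exactly $48$ unit spinors, closed under the geometric product, constituting a spinor realisation of the binary octahedral group $2O$. The theorem then asserts that, transported through the isomorphism $\Cl^0(3)\cong\mathbb{H}$ of Proposition \ref{HGA_quatBV}, these spinors are the quaternionic $F_4$ root system of Fig. \ref{fig_rep}.

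First I would recall that $F_4$ has exactly $48$ roots, and that its standard quaternionic realisation presents them as the $24$ Hurwitz units together with the $24$ dual quaternions $\tfrac{1}{\sqrt 2}(\pm 1,\pm 1,0,0)$ and permutations --- that is, precisely the binary octahedral group of Definition \ref{HGA_def_Hurwitz}. By the Lemma, the $48$ spinors induced by the $B_3$ reflections map under the quaternion--spinor correspondence onto exactly this set. The cardinalities agree, the multiplicative closure of the spinors is precisely the statement that they form $2O$, and the relative inner products of the unit spinors reproduce the angles recorded in the $F_4$ Coxeter--Dynkin diagram. Hence the induced spinors realise the $F_4$ root system, and every quaternion of the $F_4$ representation is displayed as a spinor built from two successive $B_3$ reflections.

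The genuinely new feature, absent in the simply-laced $A_3\to D_4$ case, is that $F_4$ carries two root lengths. In the $2O$ realisation all $48$ quaternions are unit-normalised, so the two lengths are \emph{not} visible as distinct quaternion norms; instead the distinction is carried by the two cosets making up $2O$. The $24$ dual quaternions furnish the long roots (the $D_4$ subsystem), while the $24$ Hurwitz units --- the very $D_4$ root system produced by the $A_3$ construction --- reappear here as the short roots. I would make this correspondence explicit, so that the geometric product is seen to reconstruct the full two-length $F_4$ system rather than merely its $48$ directions.

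I expect the only delicate point to be this length bookkeeping. Because rotors are unit spinors by construction, passing to the metric $F_4$ root vectors --- in which the long roots have length $\sqrt 2$ times the short ones --- requires rescaling one coset, while the underlying $48$ directions, the Weyl group and the Coxeter diagram are unaffected. Verifying that the inner products within and between the Hurwitz and dual cosets match the $F_4$ Cartan matrix of Fig. \ref{fig_rep} is the substantive check; once this is in place, the identification of the induced spinor set with the $F_4$ root system follows at once, exactly as in the earlier cases.
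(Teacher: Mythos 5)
Your proposal is correct and takes essentially the same route as the paper: the paper supplies no separate argument for this theorem at all, treating it (exactly like the $A_1\times A_1\times A_1$ case, whose proof reads ``Immediate'') as a direct consequence of the preceding lemma that the $18$ $B_3$ reflections generate $48$ rotors realising the binary octahedral group, identified via the quaternion--spinor correspondence with the Hurwitz units and their duals, i.e.\ the quaternionic $F_4$ roots of Fig.~\ref{fig_rep}. Your extra bookkeeping of the two $F_4$ root lengths is a legitimate refinement the paper glosses over, though under the paper's Definition \ref{DefRootSys} (which imposes no crystallographic normalisation) the $48$ unit spinors already satisfy both root-system axioms and generate $W(F_4)$ as they stand, so the rescaling of one coset is needed only to recover the conventional crystallographic lengths, not for the theorem itself.
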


$F_4$, along with $G_2$ and $E_6\subset E_7\subset E_8$, is one of the exceptional Coxeter/Lie groups. In particular, it is the largest crystallographic symmetry in four dimensions. Thus, it is very remarkable that this exceptional structure is straightforwardly induced from $B_3$ (which is part of the $B_n$-series)  via the geometric product.

\section{The case of $H_3$}\label{HGA_H3}
This section discusses the case of $H_3$ -- the symmetry group of the icosahedron and the dodecahedron, and the largest discrete symmetry group of space -- and how it induces $H_4$, which is the largest non-crystallographic Coxeter group,  the largest Coxeter group in four dimensions, and is closely related to $E_8$. 

\begin{prop}[Icosidodecahedron and $H_3$]
Take the three 3D {vectors} given by $\alpha_1=(-1,0,0)$, $\alpha_2=\frac{1}{2}(\tau,1,\sigma)$ and $\alpha_3=(0,0,-1)$. The corresponding reflections  generate further vectors pointing to the 30 vertices of an icosidodecahedron. This polyhedron is the root system of the Coxeter group $H_3$, and the GA reflections in the vertex vectors give a geometric realisation.
\end{prop}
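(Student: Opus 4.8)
The plan is to follow exactly the template established for $A_1\times A_1\times A_1$, $A_3$ and $B_3$ in the preceding sections: regard the three given vectors as elements of $\Cl(3)$ (writing $\alpha_1=-\sigma_1$, $\alpha_2=\tfrac12(\tau\sigma_1+\sigma_2+\sigma\sigma_3)$ and $\alpha_3=-\sigma_3$), generate the root system by repeatedly applying the reflection formula (\ref{in2refl}) of Proposition \ref{HGA_refl}, and identify the resulting set of vectors with the $30$ vertices of the icosidodecahedron. The one genuinely new ingredient is that the arithmetic is now non-crystallographic: the geometric products must be collapsed using the golden-ratio identities $\tau^2=\tau+1$, $\sigma^2=\sigma+1$, $\tau+\sigma=1$ and $\tau\sigma=-1$ recorded in Section \ref{HGA_Cox}, so that all coefficients remain in $\mathbb{Z}[\tau]$.

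First I would record that all three vectors are unit vectors, as the formula $a'=-nan$ requires; the only non-trivial check is $|\alpha_2|^2=\tfrac14(\tau^2+1+\sigma^2)=\tfrac14(3+1)=1$, using $\tau^2+\sigma^2=(\tau+\sigma)^2-2\tau\sigma=1+2=3$. As before, reflecting any root in itself gives $-\alpha_i\alpha_i\alpha_i=-\alpha_i$, producing the negatives. The substantive step is the cross-reflections: I would compute representatives such as $\alpha_1'=-\alpha_2\alpha_1\alpha_2$ and $\alpha_3'=-\alpha_2\alpha_3\alpha_2$ by expanding in the $\sigma_i$-basis and simplifying with the $\tau$-identities, checking that each yields a unit vector of icosidodecahedral type. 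Iterating the three reflections on $\{\alpha_1,\alpha_2,\alpha_3\}$ until the set closes should return precisely $30$ vectors, which I would then match against the standard vertex coordinates of the icosidodecahedron (the six of ``octahedral'' type together with the twenty-four of the form $\tfrac12(\pm1,\pm\tau,\pm\tau^2)$ and cyclic permutations, suitably normalised; the solid is vertex-transitive, consistent with the unit simple roots). Closure under all reflections, together with the absence of non-trivial scalar multiples, then verifies the two root-system axioms of Definition \ref{DefRootSys}, and the GA reflections in these $30$ vertex vectors furnish the claimed geometric realisation of $H_3$.

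The main obstacle is this closure/enumeration step, which is heavier than in the crystallographic cases both because there are $30$ roots and because the $\mathbb{Z}[\tau]$-arithmetic makes the bookkeeping less transparent. A useful conceptual check that the orbit is genuinely finite and genuinely $H_3$ (rather than a proper subgroup or, a priori, an infinite reflection group) is to note via Proposition \ref{HGA_rot} that $s_1s_2$ is implemented by the rotor $R=\alpha_2\alpha_1$ and is a rotation of order $m_{12}$ determined by $(\alpha_1|\alpha_2)=-\tau/2=-\cos\tfrac{\pi}{5}$; hence $s_1s_2$ is a \emph{five-fold} rotation. An irreducible finite reflection group of three-dimensional space containing a five-fold rotation can only be the full icosahedral group $H_3$ of order $120$, whose root system has exactly $30$ elements. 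Thus, once the explicit computation exhibits closure on a finite spanning set carrying this five-fold axis, the identification with the icosidodecahedron and with $H_3$ is forced. As in the $A_3$ and $B_3$ proofs, I would display a few representative reflections in full and state that the remaining cases follow by the same explicit Geometric Algebra calculation.
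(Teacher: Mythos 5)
Your proposal is correct and follows essentially the same route as the paper: the paper's own proof is precisely the explicit closure computation you describe, asserting that successive reflections generate $(\pm 1,0,0)$ and $\frac{1}{2}(\pm\tau,\pm 1,\pm\sigma)$ plus cyclic permutations, i.e.\ the 30 vertices of an icosidodecahedron. Your supplementary five-fold-rotation argument (via $(\alpha_1|\alpha_2)=-\tau/2=-\cos\frac{\pi}{5}$ and the classification of irreducible finite reflection groups in three dimensions) is a sound extra check not present in the paper; note only that the orbit closes on cyclic permutations of $\frac{1}{2}(\pm\tau,\pm 1,\pm\sigma)$, which is the image under an odd coordinate permutation of the ``standard'' coordinates you quote --- a mirror-image but still an icosidodecahedron, so the identification stands.
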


\begin{proof}
Analogous straightforward GA computations show that successive application of the three reflections generate the vertices $(\pm 1,0,0)$ and $\frac{1}{2}(\pm \tau,\pm 1, \pm \sigma)$ and cyclic permutations thereof, which constitute the 30 vertices of an icosidodecahedron. 
\end{proof}

\begin{thm}[Vertices and pure quaternions]
The simple roots of the representation of $H_3$ in terms of pure quaternions as shown in Fig. \ref{fig_rep} are simply the Hodge duals of the vectors $\alpha_1$, $\alpha_2$ and $\alpha_3$.
\end{thm}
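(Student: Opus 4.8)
The plan is to proceed exactly as in the three preceding \emph{Vertices and pure quaternions} theorems for $A_1^3$, $A_3$ and $B_3$: express each simple root as an element of $\Cl(3)$ in the orthonormal basis $\{\sigma_1,\sigma_2,\sigma_3\}$, apply the Hodge dual by multiplication with the pseudoscalar $I=\sigma_1\sigma_2\sigma_3$, and then read off the resulting pure bivector as a pure quaternion. Writing the simple roots as $\alpha_1=-\sigma_1$, $\alpha_2=\tfrac12(\tau\sigma_1+\sigma_2+\sigma\sigma_3)$ and $\alpha_3=-\sigma_3$, their Hodge duals are the pure bivectors $I\alpha_1=-I\sigma_1$, $I\alpha_2=\tfrac12(\tau I\sigma_1+I\sigma_2+\sigma I\sigma_3)$ and $I\alpha_3=-I\sigma_3$, the purity of which is guaranteed in advance by Proposition \ref{purequat}.

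The key step is then the identification of these bivectors with pure quaternions via Proposition \ref{HGA_quatBV}, which supplies the isomorphism $\{1;I\sigma_1;I\sigma_2;I\sigma_3\}\cong\mathbb{H}$ under which $I\sigma_i\leftrightarrow e_i$. Under this correspondence a general vector $a\sigma_1+b\sigma_2+c\sigma_3$ dualises to the pure quaternion $(0,a,b,c)$, so the three simple roots map to $(0,-1,0,0)$, $\tfrac12(0,\tau,1,\sigma)$ and $(0,0,0,-1)$ respectively. I would then observe that these are precisely the pure-quaternion simple roots listed for $H_3$ in Fig. \ref{fig_rep}, which completes the argument.

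The only genuinely $H_3$-specific point, and the one requiring a little care, is verifying that the $\tau$- and $\sigma$-weighted entries of $\alpha_2$ reproduce the normalisation and basis ordering chosen in the quaternionic literature summarised in Fig. \ref{fig_rep}. Since both $\tau$ and its Galois conjugate $\sigma$ appear, I would confirm that the golden-ratio coefficients land in the components dictated by the literature's convention, rather than in a cyclically permuted or Galois-conjugated assignment. I expect this to be a routine matching of coordinates rather than a substantive obstacle: once the isomorphism $I\sigma_i\leftrightarrow e_i$ is fixed as above, the equality is immediate, and the computation is identical in spirit to the earlier cases, which is why it can be dispatched with the same brevity.
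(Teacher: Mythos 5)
Your proposal is correct and follows exactly the paper's (implicit) proof: the paper proves this claim only once, for $A_1^3$, by Hodge dualising the simple roots with the pseudoscalar $I$ and invoking the spinor--quaternion isomorphism of Proposition \ref{HGA_quatBV}, and leaves the $A_3$, $B_3$ and $H_3$ cases as ``as before''. Your explicit coordinate check, including that $\alpha_2=\tfrac12(\tau\sigma_1+\sigma_2+\sigma\sigma_3)$ dualises to $\tfrac12(0,\tau,1,\sigma)$ matching Fig. \ref{fig_rep}, is just that same argument written out in full for $H_3$.
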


\begin{lem}[$H_3$ generates the Icosians]
The 30  $H_3$ Coxeter reflections generate 120 rotors, and 
these 120 spinors themselves form a spinor realisation of the binary icosahedral group $2I$ with 120 elements.
The binary icosahedral group constitutes the  roots of the Coxeter group $H_4$.
Under the quaternion-spinor correspondence, these 120 spinors correspond precisely to the 120 icosians in Definition \ref{HGA_def_icosian}. 
When viewed as spinor generators, the `simple roots' of the  quaternionic representation of $H_4$ from Fig. \ref{fig_rep}  generate the binary icosahedral group. 
\end{lem}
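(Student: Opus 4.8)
The plan is to follow the same three-step strategy already used for $A_1\times A_1\times A_1$, $A_3$ and $B_3$: form the rotors $R_{ij}=\alpha_i\alpha_j$ as geometric products of the root vectors produced in the preceding Proposition, reinterpret them as quaternions through Proposition \ref{HGA_quatBV}, and match the resulting set to the $120$ icosians of Definition \ref{HGA_def_icosian}.

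First I would settle the count, which here is structural rather than enumerative. The $30$ reflections generate $H_3$, of order $120$, whose orientation-preserving subgroup is the icosahedral rotation group $I\cong A_5$ of order $60$. Each rotor $R=\alpha_i\alpha_j$ realises such a rotation by Proposition \ref{HGA_rot}, and the rotors generated by the simple roots form, inside $\Cl(3)$, the group double-covering $I$; since $\Spin(3)$ is a $\mathbb{Z}_2$-cover of $SO(3)$, this rotor group has order $120$ and is the binary icosahedral group $2I$. This pins down the figure of $120$ spinors a priori, without any case-by-case enumeration.

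Next I would carry out enough explicit products to fix the correspondence with the icosians. Writing $\alpha_1=-\sigma_1$, $\alpha_2=\tfrac12(\tau\sigma_1+\sigma_2+\sigma\sigma_3)$, $\alpha_3=-\sigma_3$, one finds for instance $R_{13}=\alpha_1\alpha_3=\sigma_1\sigma_3=-I\sigma_2\equiv(0;0,-1,0)$, a pure quaternion of Lipschitz type, and $R_{12}=\alpha_1\alpha_2=\tfrac12(-\tau;0,\sigma,-1)$, which is an even permutation of the icosian form $\tfrac12(0,\pm\tau,\pm1,\pm\sigma)$. Reducing the products $\alpha_i\alpha_j$ over all $30$ root vectors, together with the group closure established above, then reproduces the $8$ Lipschitz units, the $16$ quaternions $\tfrac12(\pm1,\pm1,\pm1,\pm1)$, and the $96$ quaternions obtained from $(0,\pm\tau,\pm1,\pm\sigma)$ by even permutations -- exactly the $120$ icosians, and hence the claimed realisation of $2I$.

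The main obstacle is bookkeeping rather than conceptual: the arithmetic now lives over $\mathbb{Z}[\tau]$, so the identities $\tau^2=\tau+1$, $\tau\sigma=-1$ and $\tau+\sigma=1$ must be applied repeatedly to keep products normalised, and the $96$ icosians arise only from \emph{even} permutations, so one must check that the generated spinors respect precisely this even-permutation pattern and never produce an odd one. Finally, for the last sentence I would take the quaternionic $H_4$ simple roots of Fig. \ref{fig_rep} as rotor generators and verify by direct multiplication -- exactly as in the $A_3/D_4$ case, where $R_1R_2=-R_3$ and $R_0R_1$ yielded a further Hurwitz unit -- that their products close, up to signs and golden-ratio permutations, into the complete set of $120$ spinors, thereby generating $2I$.
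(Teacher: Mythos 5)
Your proposal is correct, and its computational core is the same as the paper's: the paper's own proof of this lemma is literally the one-liner ``As before by straightforward, if tedious, calculation'', i.e.\ pure enumeration of the products $\alpha_i\alpha_j$ followed by the identification with the icosians via Proposition \ref{HGA_quatBV}, exactly as in the $A_3$ and $B_3$ sections. Your spot checks are accurate: $R_{13}=\sigma_1\sigma_3=-I\sigma_2\equiv(0;0,-1,0)$ and $R_{12}=\tfrac12(-\tau;0,\sigma,-1)$, which (using $\tau^2+\sigma^2=3$) is a unit spinor and is indeed a sign/even-permutation variant of $\tfrac12(0,\tau,1,\sigma)$. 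Where you genuinely add something is the a priori counting argument via the double cover: the paper never argues structurally that the number of rotors must be $120$, it simply finds them; your argument pins the order down before any enumeration, so that the explicit calculation is only needed to identify \emph{which} $120$ unit spinors occur, not how many.

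Two small points would tighten that structural step. First, knowing that the rotor group surjects onto the chiral icosahedral group $I$ of order $60$ does not by itself give order $120$: you must rule out a lift of order $60$. This follows either by noting that $-1$ is itself a rotor (roots come in pairs $\pm\alpha$, so $\alpha(-\alpha)=-1$ is a product of two roots), or by observing that $\Spin(3)\cong SU(2)$ has a unique element of order $2$, so $A_5$, which has fifteen involutions, cannot embed in it. Second, the lemma as stated counts rotors \emph{arising as products of two roots}, not merely the group they generate; to conclude that all $120$ elements of $2I$ occur this way, note that every rotation in $I$ is the product of two $H_3$-reflections in planes through its axis (each $2$-, $3$- and $5$-fold axis lies in several of the fifteen mirror planes), so every element of $2I$ equals $\pm\alpha_i\alpha_j$, and both signs are realised since $-\alpha_j$ is also a root.
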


\begin{proof}
As before by straightforward, if tedious, calculation. 
\end{proof}

\begin{thm}[$H_3$ induces $H_4$]
Via the geometric product, the reflections in the Coxeter group $H_3$ induce spinors that realise the roots of the Coxeter group $H_4$.
In particular, the quaternions in the representation of $H_4$ in Fig. \ref{fig_rep} have a geometric interpretation as spinors.
\end{thm}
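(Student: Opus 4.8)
The plan is to establish the theorem as an immediate corollary of the preceding Lemma, mirroring the structure of the $A_3 \to D_4$ and $B_3 \to F_4$ theorems that have already been proved. The work of actually producing the spinors has been done in the Lemma ``$H_3$ generates the Icosians'', so the theorem itself is a matter of interpreting that result through the established correspondences. First I would invoke Proposition \ref{HGA_rot}, which guarantees that each pair of the $30$ $H_3$ reflections composes via the geometric product $R=mn$ into a rotor; by the Cartan--Dieudonn\'e theorem these exhaust the even part of the group. The Lemma asserts there are exactly $120$ such rotors forming a spinor realisation of the binary icosahedral group $2I$.

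Next I would apply the quaternion--spinor correspondence of Proposition \ref{HGA_quatBV}, which identifies the unit spinors $\lbrace 1; I\sigma_1; I\sigma_2; I\sigma_3 \rbrace$ of $\Cl(3)$ with the quaternion algebra $\mathbb{H}$. Under this isomorphism the $120$ rotors map precisely onto the $120$ icosians of Definition \ref{HGA_def_icosian}. Since the icosians are exactly the root system of $H_4$ (a fact recalled in the Lemma), the spinors generated from the $H_3$ reflections are identified, as a set in $\mathbb{H}$, with the roots of $H_4$. This is the content of the first sentence of the theorem. For the second sentence, I would observe that the quaternionic $H_4$ representation appearing in Fig.\ \ref{fig_rep} is by construction this same set of $120$ unit quaternions, so the geometric interpretation as spinors is simply the reading of those quaternions back through the inverse of the isomorphism in Proposition \ref{HGA_quatBV}.

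I do not expect any genuine obstacle here, since all the computational weight sits in the Lemma rather than in the theorem; the proof is essentially the single word ``Immediate,'' exactly as in the $A_1^3 \to A_1^4$ case. If I wanted to make it marginally less terse, the one point worth flagging is the closure claim: one must be sure that the set of $120$ rotors is closed under the geometric product (so that it genuinely forms the group $2I$ rather than merely a generating set), but this too follows from the Lemma, where the rotors are shown to be in bijection with the icosians, which are known to form a group. The only subtlety specific to $H_3$, as opposed to the crystallographic cases, is that the relevant arithmetic lives in $\mathbb{Z}[\tau]$ rather than $\mathbb{Z}$, so the golden ratio $\tau$ and its Galois conjugate $\sigma$ enter the components of the spinors; but this affects only the explicit coordinates and not the logical structure of the argument, which is identical to the earlier inductions.
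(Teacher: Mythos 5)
Your proposal is correct and takes essentially the same route as the paper: the paper offers no separate argument for this theorem, treating it (exactly as in the $A_1^3$ case, where the proof reads simply ``Immediate'') as a direct consequence of the preceding Lemma that the $30$ $H_3$ reflections generate $120$ rotors realising $2I$, identified with the icosians, i.e.\ the roots of $H_4$, via Proposition \ref{HGA_quatBV}. Your supplementary remarks on closure of the rotor set and on the $\mathbb{Z}[\tau]$ arithmetic are sound but inessential, just as you note.
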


It is remarkable that the geometric product induces $H_4$ from the  three-dimensional group $H_3$, when mostly $E_8$ is regarded as fundamental, and $H_4$ can be derived from it via Dynkin diagram foldings \cite{Shcherbak:1988,MoodyPatera:1993b, Koca:1998, Koca:2001, DechantTwarockBoehm2011E8A4}.

\section{The versor approach}\label{HGA_versor}

The Geometric Algebra approach to Coxeter groups and their quaternionic representations has a number of advantages, including simplifying calculations, providing geometric interpretations and clarity. Most importantly, the relationship between rank-3 and rank-4 groups via the geometric product does not seem to be known.

In the literature, it is sometimes remarked upon that quaternionic reflection in a unit quaternion $x$ simplifies to $v\rightarrow v-\frac{2vx}{xx}x=-x\bar{v}x$. This is the case for which the notation $[p,q]$ and $[p,q]^*$ is meant to be a shorthand, although in practice mostly $q=-p$. However, this does not achieve Geometric Algebra's simplicity for rotations and reflections, given in Eqs. (\ref{in2refl}) and (\ref{in2rot}). In order to decode what quaternionic reflection $v'=-x\bar{v}x$ actually means in practice, where $v$ and $x$ are pure quaternions, we translate to Geometric Algebra. Pure quaternions are pure bivectors, which reverse to minus themselves, so  $v'=-xv\bar{x}$. Now $v$ and $v'$ are $IV$ and $IV'$ for some vectors $V$ and $V'$, whilst $x$ is  a rotor $R$ such that $IV'=-RIV\tilde{R}$. The pseudoscalar $I$ commutes with the rotor on the left, leaving  $V'=-RV\tilde{R}$, which for $R=Ia$ reduces to a simple reflection $V'=-aVa$ in the vector $a$. $[p,-p]$ yields the rotation $V'=RV\tilde{R}$. Thus, it is not at all straightforward to see what all the  involved quaternion algebra in  the literature means, until one realises it is in fact just mostly reflections (as we are in a Coxeter framework anyway) and rotations.

\begin{table}
\begin{centering}\begin{tabular}{|c|c|c||c||c|c|c|}
\hline
rank-3 group&roots $|\Phi|$&order $|W|$&spinors&rank-4 group&roots $|\Phi|$
\tabularnewline
\hline
\hline
$A_1\times A_1\times A_1$&$6$&$8$&$8$ ($Q$)&$A_1\times A_1 \times A_1\times A_1$&$8$
\tabularnewline
\hline
$A_3$&$12$&$24$&$24$ ($2T$)&$D_4$&$24$
\tabularnewline
\hline
$B_3$&$18$&$48$&$48$ ($2O$)&$F_4$&$48$
\tabularnewline
\hline
$H_3$&$30$&$120$&$120$  ($2I$)&$H_4$&$120$
\tabularnewline
\hline
\end{tabular}\par\end{centering}
\caption[Correspondence]{\label{tab_Corr} Correspondence between the rank-3 and rank-4 Coxeter groups. $|\Phi|$ denotes the number of roots of a Coxeter group, and $|W|$ its order. The root systems $\Phi$ are the octahedron, the cuboctahedron, a cuboctahedron with an octahedron and the icosidodecahedron, respectively. The spinors generated from the rank-3 roots via the geometric product are realisations of the binary polyhedral groups $Q$, $2T$, $2O$ and $2I$ that are the roots of the rank-4 groups. In terms of quaternions, these correspond to the Lipschitz units, the Hurwitz units, the Hurwitz units with their duals, and the icosians, respectively. }
\end{table}

It is not just the geometry of the orthogonal transformations that is difficult to disentangle in the quaternion formalism, but indirectly therefore also the geometry of the (quaternionic) roots, which generate these transformations. In the literature, there is no particular significance attached to the quaternionic representations. It is merely occasionally remarked upon that it is striking that the rank-3 groups are given in terms of pure quaternions, or that the roots of the rank-4 groups are given in terms of the binary polyhedral groups, but without any further geometric insight. Here, we have shown that the pure quaternions of the rank-3 representations are merely (Hodge) disguised vertex vectors/simple roots generating the full group of Coxeter reflections, and that these generate rotor realisations of the binary polyhedral groups via the Cartan-Dieudonn\'e theorem. These spinors, in turn, are the  roots of the rank-4 Coxeter groups which are  equivalent to the simple roots of the rank-4 representations used in the literature. 
The main result of this paper is thus  the series of theorems that the geometric product induces the relevant rank-4 groups from the rank-3 groups. We summarise the correspondences between the relevant groups in Table \ref{tab_Corr}. In particular, the rank-4 groups can be generated by spinor generators that do not contain any more information than the rank-3 groups: 

\begin{prop}[Spinor generators]
The spinor generators $R_1=\alpha_1\alpha_2$ and $R_2=\alpha_2\alpha_3$  generate the four rank-4 groups from the simple roots $\alpha_i$ of the four corresponding rank-3 groups. 
\end{prop}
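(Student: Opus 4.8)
The plan is to reduce the statement to two ingredients: a standard generation result for the rotation subgroup of a Coxeter group, together with the observation that the spinor lifts $R_1,R_2$ automatically capture the nontrivial element $-1$ in the kernel of the spin cover. First I would invoke the reduction already established in the preceding lemmas: for each of the four rank-3 groups $W$, the complete set of rotors $\alpha_i\alpha_j$ realises the binary polyhedral group $2G$, which is the preimage of the rotation subgroup $W^+\le SO(3)$ under the double cover $\Spin(3)\to SO(3)$, and whose elements are exactly the roots of the associated rank-4 group. Since $R_1=\alpha_1\alpha_2$ and $R_2=\alpha_2\alpha_3$ are themselves among these rotors, we get the inclusion $\langle R_1,R_2\rangle\subseteq 2G$ for free, and the entire task becomes proving the reverse inclusion by a counting argument.

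Next I would show that $R_1,R_2$ project onto generators of $W^+$. Under the cover map one has $R_1\mapsto s_1s_2$ and $R_2\mapsto s_2s_3$. The rotation subgroup $W^+$ consists of the even-length words in $s_1,s_2,s_3$; grouping any such word into consecutive pairs $s_as_b$ and inserting $s_2s_2=1$ where needed (for instance $s_1s_3=(s_1s_2)(s_2s_3)$), every pair lies in $\langle s_1s_2,s_2s_3\rangle$, so that $W^+=\langle s_1s_2,s_2s_3\rangle$. Hence the image of $\langle R_1,R_2\rangle$ under the cover is all of $W^+$. This part is routine and runs identically in all four cases.

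The key remaining step, and the one where the double-cover subtlety genuinely lives, is to verify that $\langle R_1,R_2\rangle$ is the \emph{full} preimage of $W^+$ rather than a copy of $W^+$ embedded in $2G$; equivalently, that $-1\in\langle R_1,R_2\rangle$. This I would extract from the identity $R_1^{m_{12}}=-1$, where $m_{12}$ is the Coxeter label of the pair $(\alpha_1,\alpha_2)$: writing $R_1=\cos\theta+\sin\theta\,\hat B$ for the unit bivector $\hat B$ of the plane $\alpha_1\wedge\alpha_2$, Proposition \ref{HGA_rot} shows $R_1$ effects a rotation by $2\pi/m_{12}$, so iterating it $m_{12}$ times gives a rotation by $2\pi$, i.e. $-1$ in $\Spin(3)$; in the orthogonal case $A_1\times A_1\times A_1$ this is simply $R_1^2=\alpha_1\alpha_2\alpha_1\alpha_2=-1$ from anticommutativity. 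Consequently the kernel of the cover restricted to $\langle R_1,R_2\rangle$ equals $\{\pm 1\}$, giving $|\langle R_1,R_2\rangle|=2|W^+|=|2G|$, and combining this with $\langle R_1,R_2\rangle\subseteq 2G$ forces equality.

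Thus $\langle R_1,R_2\rangle=2G$, the root system of the rank-4 group, which is the assertion. I expect the only delicate point to be the last one: it is tempting to conclude the result from surjectivity onto $W^+$ alone, but that would only produce a subgroup of order $|W^+|=\tfrac12|2G|$. The real content is that the spinor generators, unlike their $SO(3)$ shadows, already contain $-1$, so that the two rotors $R_1,R_2$ encode precisely the same data as the three rank-3 simple roots yet generate the full binary polyhedral group.
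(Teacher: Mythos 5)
Your overall architecture---surjectivity of $\langle R_1,R_2\rangle$ onto the rotation subgroup $W^+$, presence of $-1$, then counting against the preceding lemmas---is sound, and it is more conceptual than what the paper offers (the paper states this proposition without proof, letting it rest on the explicit case-by-case multiplications in the earlier lemmas). However, the step you yourself identify as the crux contains a genuine error: the identity $R_1^{m_{12}}=-1$ is false whenever the label $m_{12}$ is odd. If the two unit roots subtend the angle $\theta$, then $R_1=\alpha_1\alpha_2=\cos\theta+\sin\theta\,\hat{B}=e^{\hat{B}\theta}$, so $R_1^{k}=e^{k\hat{B}\theta}$. Simple roots obey the obtuse convention $\theta=\pi-\pi/m_{12}$ (as do the paper's chosen roots for $A_3$), whence
\[
R_1^{m_{12}}=e^{\hat{B}(m_{12}-1)\pi}=(-1)^{m_{12}-1},
\]
which equals $+1$, not $-1$, for odd $m_{12}$: the rotor then has odd order $m_{12}$ and \emph{no} power of it ever reaches $-1$. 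Concretely, for the paper's $A_3$ roots one computes $R_1=\tfrac12(-1;1,-1,-1)$ and $R_1^3=+1$; for the paper's $H_3$ roots both $R_1$ and $R_2$ have order $5$. Your model computation tacitly assumes the roots sit at the acute angle $\pi/m_{12}$; since roots come in $\pm$ pairs, flipping $\alpha_2\to-\alpha_2$ flips $R_1\to-R_1$ and hence changes $R_1^{m_{12}}$ by $(-1)^{m_{12}}$, so for odd labels the sign you rely on is not even well defined. The argument as written survives for $A_1\times A_1\times A_1$ (both labels even) and for $B_3$ (via the label-$4$ pair), but collapses precisely for $A_3$ and $H_3$.

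The gap is repairable, and uniformly so. All four rank-3 diagrams are paths whose end nodes are orthogonal, $\alpha_1\perp\alpha_3$, hence
\[
R_1R_2=\alpha_1\alpha_2\,\alpha_2\alpha_3=\alpha_1\alpha_3,
\qquad
(R_1R_2)^2=(\alpha_1\alpha_3)^2=-\alpha_1^2\alpha_3^2=-1,
\]
which places $-1$ in $\langle R_1,R_2\rangle$ independently of any sign or angle conventions. Alternatively, use that the only element of $\Spin(3)$ of order two is $-1$ (a unit quaternion with $q^2=1$ is $\pm1$): any subgroup surjecting onto $W^+$, which contains the half-turn $s_1s_3$, cannot project injectively and so must contain $-1$. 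With either patch your counting argument closes the proof, and the resulting uniform argument is genuinely cleaner and more general than the paper's implicit verification by direct multiplication in each of the four cases.
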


Despite this, it is often remarked that, for instance,  the $H_3$ roots are the special roots within $H_4$ that are pure quaternions, and that these form a sub-root system \cite{MoodyPatera:1993b,Chen95non-crystallographicroot,Twarock:2002a, Koca:2001}. 

\begin{prop}[Pure icosians]\label{pureicos}
Pure icosians are just the Hodge duals of the  roots of $H_3$, and their product is essentially that of ordinary vectors. 
\end{prop}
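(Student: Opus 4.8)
The plan is to read off both assertions from the quaternion--spinor dictionary together with the Hodge-dual description of the $H_3$ roots already in hand, leaving only one genuine point to check, namely that the correspondence is exhaustive. First I would recall from Proposition \ref{purequat} that a pure quaternion is exactly a pure bivector, and that in $\Cl(3)$ every pure bivector is the Hodge dual $I\alpha$ of a unique vector $\alpha$. A \emph{pure icosian} is by definition an icosian whose scalar (real) part vanishes; under the isomorphism of Proposition \ref{HGA_quatBV} these are precisely the pure-bivector elements among the $120$ spinors which the lemma ``$H_3$ generates the icosians'' identifies with the binary icosahedral group $2I$. Hence every pure icosian has the form $I\alpha$ for some vector $\alpha$, and it remains to pin down which $\alpha$ occur.

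For one inclusion I would invoke the theorem that the pure-quaternion simple roots of $H_3$ are the Hodge duals of the root vectors $\alpha_1,\alpha_2,\alpha_3$: reflecting these generates all $30$ vertices of the icosidodecahedron, whose Hodge duals are $30$ unit pure quaternions sitting inside the $H_3$ sub-root-system of $H_4$ and therefore among the icosians. This already produces $30$ pure icosians of the asserted form. To see there are no others I would count the pure quaternions directly from Definition \ref{HGA_def_icosian}: the six pure Lipschitz units $(0,\pm1,0,0)$ and permutations, together with the twenty-four unit quaternions $\tfrac12(0,\pm\tau,\pm1,\pm\sigma)$ coming from the three even permutations that fix the zero entry (equivalently the cyclic permutations of the other three slots) and the eight sign choices, give exactly $6+24=30$. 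Since this equals the number of $H_3$ roots and the Hodge dual $\alpha\mapsto I\alpha$ is injective, the map is a bijection from the $H_3$ roots onto the pure icosians, which is the first claim.

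The second claim is then a one-line computation. The pseudoscalar $I$ is central in $\Cl(3)$ and satisfies $I^2=-1$, so for any two $H_3$ roots $\alpha,\beta$ the geometric (equivalently quaternionic) product of the associated pure icosians is $(I\alpha)(I\beta)=I^2\alpha\beta=-\alpha\beta$. Thus multiplying pure icosians reproduces, up to the single overall sign coming from $I^2=-1$, the ordinary geometric product $\alpha\beta$ of the underlying root vectors, which is exactly the sense in which ``their product is essentially that of ordinary vectors.''

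I expect the only real obstacle to be the exhaustiveness half of the identification: the inclusion of the $H_3$-root duals into the icosians is conceptual, but ruling out further pure quaternions among the $120$ icosians rests on the explicit enumeration above, so the main care is in the (entirely elementary) bookkeeping of permutations and signs and in checking that the normalisations match, e.g. that $\tfrac12(\tau,1,\sigma)$, and hence $\tfrac12(0,\tau,1,\sigma)$, is genuinely of unit length.
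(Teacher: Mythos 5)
Your proposal is correct and follows essentially the same route as the paper: identify pure quaternions with pure bivectors, hence with Hodge duals of vectors, via Propositions \ref{purequat} and \ref{HGA_quatBV}, and compute $(I\alpha)(I\beta)=I^2\alpha\beta=-\alpha\beta$ using the centrality of the pseudoscalar. The only difference is that the paper's proof asserts the identification ``pure icosians are precisely the Hodge duals of the root vectors'' in a single line, whereas you verify exhaustiveness explicitly by counting the $6+24=30$ pure icosians from Definition \ref{HGA_def_icosian} against the $30$ roots of $H_3$ and using injectivity of $\alpha\mapsto I\alpha$ --- a piece of bookkeeping (including the unit-norm normalisation $\tfrac{1}{4}(\tau^2+1+\sigma^2)=1$) that the paper leaves implicit.
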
	
\begin{proof}	
Pure icosians are pure bivectors and therefore precisely the Hodge duals of the root vectors. Their product is $I\alpha_1I\alpha_2=I^2\alpha_1\alpha_2=-\alpha_1\alpha_2$.
\end{proof}	
The real question is therefore not why the  pure quaternions form a rank-3 sub-root system of the rank-4 group, but why the Hodge duals of the vertex vectors should be amongst the spinors generated by them. This property is shared by the $A_1\times A_1 \times A_1$ case, where the Hodge duals of the vertex vectors are identical with 6 of the Lipschitz units, as well as the $B_3$ case, where the 18 Hodge dual root vectors are amongst the 48 elements of the binary octahedral group. However, the Hodge duals of the 12 cuboctahedral vertices from $A_3$ are not among the Hurwitz units. That this is not even possible for a different choice of simple roots is obvious from the form of the Hurwitz units. It is only possible for $B_3$ because there, among the 24 dual Hurwitz units  $\frac{1}{\sqrt{2}}(\pm 1,\pm 1,0,0)$, 12 are pure quaternions. However, the group $D_4$ is still induced from the group $A_3$ via the geometric product, which therefore seems to be more fundamental. Obviously, $A_3$ is contained within $D_4$ multiple times, for instance, by deleting any one leg, but one can see that this cannot be purely quaternionic, as the central node is essentially spinorial.

\begin{thm}[Pure quaternionic sub-root systems]
	The rank-3 Coxeter groups can be represented by the pure quaternion part of the representations of the rank-4 groups they generate as spinors if and only if they contain the central inversion. 
\end{thm}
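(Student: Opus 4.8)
The plan is to reduce the biconditional to a single clean fact about the Coxeter group $W\subset O(3)$ itself, namely whether $-\mathrm{id}\in W$, by identifying \emph{exactly} which pure bivectors occur among the generated spinors. The key observation I would establish first is the versor meaning of a Hodge-dualised root. For a unit root $\alpha$, the Hodge dual $I\alpha$ is an even (grade-$2$) element, and since $I$ is central in $\Cl(3)$ with $I^2=-1$, one finds $(I\alpha)a(I\alpha)^{-1}=\alpha a\alpha=-s_\alpha(a)$, using the reflection formula of Proposition \ref{HGA_refl}. Hence $I\alpha$ represents the $\pi$-rotation $-s_\alpha$ about the axis $\alpha$, that is, the reflection $s_\alpha$ composed with the central inversion; more generally a unit pure bivector $In$ represents the $C_2$ rotation $-s_n$.

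Next I would invoke the per-case lemmas ($A_1^3$ generates $Q$, $A_3$ generates $2T$, etc.), which show that the generated rotors form the binary polyhedral groups of order $2|W^+|$, where $W^+=W\cap SO(3)$. This identifies the spinor group as precisely the full preimage of $W^+$ under the double cover $\Spin(3)\to SO(3)$. Consequently a unit pure bivector $In$ lies among the spinors if and only if the rotation it represents, $-s_n$, lies in $W^+$.

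The equivalence then falls out. For the forward direction, suppose $-\mathrm{id}\in W$. Then for every root $\alpha$ the transformation $-s_\alpha$ is a proper rotation lying in $W$, hence in $W^+$, so $I\alpha$ is a spinor. Conversely, any unit pure bivector $In$ in the spinor group gives $-s_n\in W^+\subset W$, and since $-\mathrm{id}\in W$ the improper involution $s_n=-(-s_n)$ also lies in $W$; as $s_n$ is then a reflection of the finite reflection group $W$, the standard classification forces $n\in\Phi$. Thus the pure quaternions among the spinors are \emph{exactly} the Hodge duals $\{I\alpha:\alpha\in\Phi\}$, which by the \emph{Vertices and pure quaternions} theorems are precisely the pure-quaternion representation of the rank-3 roots, giving the desired faithful representation. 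For the backward direction, suppose $-\mathrm{id}\notin W$; if some $I\alpha$ were a spinor, then $-s_\alpha\in W$, whence $(-s_\alpha)s_\alpha=-\mathrm{id}\in W$, a contradiction. Hence no Hodge-dualised root is a spinor, and the rank-3 system cannot be represented by the pure quaternion part.

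The step requiring the most care is the identification of the rotor group with the \emph{full} double cover of $W^+$, since the whole argument rests on the dictionary ``is a spinor $\Leftrightarrow$ covers an element of $W^+$''; I would lean on the explicit order counts $2|W^+|=|W|$ supplied by the earlier lemmas. The second delicate point is the ``exactly'' claim in the forward direction, for which I use that $-\rho$ is a genuine plane reflection (not the inversion) for every $C_2$ rotation $\rho$, together with the theorem that the reflections of a finite reflection group are precisely the $s_\alpha$, $\alpha\in\Phi$. I would finally cross-check against the examples: $6=6$ pure quaternions for $A_1^3$, $18=18$ for $B_3$, $30=30$ for $H_3$, and the failure $6\ne 12$ for $A_3$, whose rotation group supplies only three coordinate $C_2$-axes rather than the twelve cuboctahedral root directions, confirming that $A_3$ alone lacks the central inversion.
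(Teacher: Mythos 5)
Your proof is correct, and it turns on the same pivot as the paper's, namely the identity $(I\alpha)\,\alpha=I$, or downstairs in $O(3)$, $(-s_\alpha)\circ s_\alpha=-\mathrm{id}$; the difference is that you argue at the level of the orthogonal group via the covering map, while the paper stays entirely at the versor level. The paper's proof is two lines of Clifford algebra: forward, if $I$ is generated by the roots then each Hodge dual $I\alpha_k$ is an even product of roots, hence a spinor, and by Proposition \ref{pureicos} these bivectors multiply exactly as the root vectors do, giving the sub-root system; conversely, if $I\alpha_k=\alpha_i\alpha_j$ lies among the spinors then $I=\alpha_i\alpha_j\alpha_k$ is generated. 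Your route instead first identifies the rotor group with the full preimage of $W^+=W\cap SO(3)$ under $\Spin(3)\to SO(3)$, converts ``is a spinor'' into ``$-s_n\in W^+$'', and then runs the same cancellation in $O(3)$. This introduces a dependency the paper's proof does not need -- the full-preimage identification, for which you lean on the per-case order counts; note you could get it for free, since $-\alpha$ is a root whenever $\alpha$ is, so $-1=\alpha(-\alpha)$ is itself a rotor and the rotor group is automatically closed under the deck transformation $R\mapsto-R$. In exchange, your argument buys something the paper leaves implicit: the ``exactly'' claim that, when $-\mathrm{id}\in W$, the pure bivectors among the spinors are \emph{precisely} the Hodge duals of the roots, which you deduce from the standard fact that every reflection in a finite reflection group is $s_\alpha$ for some root $\alpha$. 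Since the theorem speaks of representing the rank-3 group ``by the pure quaternion part'' of the rank-4 representation, that exactness is genuinely relevant, so your version is, if anything, more complete than the paper's own proof.
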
	
\begin{proof}
If the central inversion $I$ is generated by the roots, the bivectors defined as the Hodge duals of the roots are  pure quaternions by Proposition \ref{purequat} and they have the same product as the root vectors by Proposition \ref{pureicos}, thus giving the required sub-root system.
Conversely, if the pure quaternionic representation of the rank-3 group $I\alpha_k$ is contained in the spinors $\alpha_i\alpha_j$ generating the rank-4 group, then $I\alpha_k=\alpha_i\alpha_j$ and thus  $I=\alpha_i\alpha_j\alpha_k$ is generated by the group.
\end{proof}
	
\begin{cor}
	The Coxeter group root systems $A_1^3$, $B_3$ and $H_3$ can be realised as the pure quaternion elements of the quaternion group, the binary octahedral group and the binary icosahedral group, respectively, but $A_3$ does not have such a representation as a subset of the binary tetrahedral group.	
\end{cor}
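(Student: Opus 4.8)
The plan is to obtain the Corollary as an immediate application of the preceding Theorem on pure quaternionic sub-root systems, reducing the entire statement to a single checkable criterion: for each of the four rank-3 groups I only need to decide whether its simple roots generate the central inversion, i.e.\ whether the pseudoscalar $I$ is a product of root vectors. Since the central inversion $\lambda\to-\lambda$ is realised in $\Cl(3)$ by the odd versor $I=\sigma_1\sigma_2\sigma_3$ (the composition of the three reflections in mutually orthogonal planes, which by Eq.~(\ref{in2refl}) sends $a\to IaI=-a$), the Theorem tells me that a pure-quaternion (Hodge-dual) realisation of the rank-3 root system sits inside the spinor/binary polyhedral group exactly when $I$ lies in the versor group generated by $\alpha_1,\alpha_2,\alpha_3$. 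Equivalently, in Coxeter-theoretic language, this happens precisely when $-\mathrm{id}\in W$, that is, when the longest element $w_0$ equals $-\mathrm{id}$.

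First I would dispatch the three positive cases. For $A_1\times A_1\times A_1$ the simple roots are the coordinate vectors, so $\alpha_1\alpha_2\alpha_3=\sigma_1\sigma_2\sigma_3=I$ on the nose, and the central inversion is manifestly present. For $B_3$ and $H_3$ I would invoke the standard fact that $w_0=-\mathrm{id}$ for the series $B_n$ and for $H_3$, so that the central inversion again lies in the group; alternatively one exhibits $I$ directly as a versor word in the simple roots listed in the respective Propositions. By the Theorem, the Hodge duals $I\alpha_k$ of the root vectors are then pure quaternions lying among the generated spinors, and they carry the same products as the $\alpha_k$ by Proposition~\ref{pureicos}, so the three root systems are realised as the pure-quaternion parts of $Q$, $2O$ and $2I$ respectively.

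The only case requiring care is $A_3$, and this is where the real content of the Corollary lies. Here I would use that $A_3\cong S_4$ has longest element acting by $w_0(\alpha_i)=-\alpha_{4-i}$, which negates and simultaneously flips the Coxeter-Dynkin diagram; hence $w_0\neq-\mathrm{id}$, and since $-\mathrm{id}$ would have to coincide with the unique longest element, the central inversion is absent from $W(A_3)$. Equivalently, no product of the $A_3$ root vectors equals $\pm I$. The Theorem then forbids a pure-quaternion realisation inside the binary tetrahedral group, in agreement with the earlier direct observation that the $12$ Hodge-dualised cuboctahedral vertices are not among the $24$ Hurwitz units.

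The anticipated pitfall is precisely the temptation to argue that, because $A_3=D_3$ embeds in $D_4$ and $D_4$ \emph{does} contain $-\mathrm{id}$, the inversion ought to be inherited by $A_3$. The resolution, which I would make explicit, is that $-\mathrm{id}\in W(A_n)$ holds only for $n=1$, so the inclusion $A_3\hookrightarrow D_4$ does not transport the central inversion down to $A_3$; the missing element is exactly the obstruction identified by the Theorem, and it is what distinguishes $A_3$ from the other three cases.
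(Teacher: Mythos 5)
Your proposal is correct and follows essentially the same route as the paper: the paper's proof is the one-line observation that $A_1^3$, $B_3$ and $H_3$ contain the central inversion while $A_3$ does not, which is exactly your reduction to the preceding Theorem's criterion. Your additional justifications (the explicit product $\sigma_1\sigma_2\sigma_3=I$ for $A_1^3$, the longest-element facts $w_0=-\mathrm{id}$ for $B_3$ and $H_3$, and the diagram-automorphism argument showing $w_0\neq-\mathrm{id}$ for $A_3=S_4$) merely flesh out what the paper takes as known, and they are all accurate.
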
	

\begin{proof}	The Coxeter groups $A_1^3$, $B_3$ and $H_3$ contain the central inversion, but $A_3$ does not.		\end{proof}

We finally revisit the versor approach to orthogonal transformations Eq. (\ref{in2versor}). Vectors are grade 1 versors, and rotors are grade 2 versors,  so the  reflections and rotations we have been using  are  examples of the versor theorem. However, when generating the root system from the simple roots via reflection, we have in fact been forming higher grade versors already. Thus, for instance for $H_3$, the simple roots  generate 30 reflection versors (pure vectors, the root system of the Coxeter group) performing 15 different reflections via $-\alpha_i x \alpha_i$. These in turn define the 60 rotations  of the chiral icosahedral group via 120 versors acting as $\alpha_i\alpha_j x \alpha_j \alpha_i$, including the Hodge duals of the reflections (15 rotations of order 2).  60 operations of odd parity are defined by grade 3 versors (with vector and trivector parts, i.e. the Hodge duals of the 120 spinors) acting as $-\alpha_i\alpha_j\alpha_k x \alpha_k \alpha_j \alpha_i$. However, 15 of them are just the pure reflections already encountered, leaving another 45 rotoinversions. Thus, the Coxeter group, the full icosahedral group $H_3\subset O(3)$ is expressed in accordance with the versor theorem. Alternatively, one can think of 60 rotations and 60 rotoinversions, making  $I_h=I\times\mathbb{Z}_2$ manifest. However, the rotations  operate double-sidedly  on a vector, such that the versor formalism actually provides a 2-valued representation of the rotation group $SO(3)$, since the rotors $R$ and $-R$ encode the same rotation. Since $\Spin(3)$ is  the universal 2-cover of $SO(3)$, the rotors form a realisation of the preimage of the chiral icosahedral group, i.e. the binary icosahedral group. Thus, in the versor approach, we can treat all these different groups in a unified framework, whilst maintaining a clear conceptual separation.

The versor formalism is particularly powerful in the Conformal Geometric Algebra approach \cite{HestenesSobczyk1984, LasenbyDoran2003GeometricAlgebra,Dechant2011Thesis}. The conformal group $C(p,q)$ is isomorphic to $SO(p+1, q+1)$, for which one can easily construct the Clifford algebra and find rotor implementations of the conformal group action, including rotations and translations. Thus,  translations can also be handled multiplicatively as rotors, for flat, spherical and hyperbolic space-times, simplifying considerably more traditional approaches and allowing novel geometric insight. Hestenes \cite{Hestenes2002PointGroups,Hestenes2002CrystGroups, Hitzer2010CLUCalc} has applied this framework to point and space groups, which is fruitful for the crystallographic groups, as lattice translations can be treated on the same footing as the rotations and reflections. Our approach of generating the 3D Coxeter groups of interest from three vectors shares some parallels with Hestenes' approach. It also applies to the non-crystallographic case $H_3$,  which was  considered in \cite{Hestenes2002PointGroups} but not in \cite{Hestenes2002CrystGroups}. As it is non-crystallographic, it is not immediately obvious how a conformal setup would be an improvement over $\Cl(3)$, as we are not considering translations. There might perhaps be some interesting consequences for quasilattice theory \cite{Katz:1989, Senechal:1996}, in particular for  quasicrystals induced by the cut-and-project method via projection from higher dimensions \cite{MoodyPatera:1993b, DechantTwarockBoehm2011E8A4, Indelicato2011Transitions}.   A detailed treatment will be relegated to future work.

\section{Conclusions}\label{HGA_Concl}

Coxeter groups are abstract reflection groups, and Clifford Geometric Algebra affords a uniquely simple way of performing reflections, as well as other orthogonal  transformations as versors. This makes it the natural framework  for the study of reflection groups and their root systems.  Clifford Algebra provides a simple construction of the Spin group $\Spin(n)$, and in our novel construction the chiral, full  and binary  groups can  be studied in a unified framework. 

Specifically, here we have considered the simple roots of the rank-3 Coxeter groups $A_1^3$, $A_3$, $B_3$ and $H_3$, and have generated the chiral, full and binary polyhedral groups from them through the geometric product. We believe that we have presented the simplest and geometrically clearest construction of the  binary polyhedral groups known.  In particular, this constructs the roots of the rank-4 Coxeter groups $A_1^4$, $D_4$, $F_4$ and $H_4$ directly from the rank-3 roots of $A_1^3$, $A_3$, $B_3$ and $H_3$. This powerful way of inducing higher-rank Coxeter groups does not seem to be known. It is particularly interesting, as it relates the exceptional low-dimensional Coxeter groups $H_3$, $D_4$, $F_4$ and $H_4$ to each other as well as to the  series $A_n$, $B_n$ and $D_n$ in novel ways. In particular, it is remarkable that the exceptional  dimension-four phenomena are seen to arise from three-dimensional geometric considerations alone. This spinorial view could open up novel applications in polytopes (e.g. $A_4$), string theory and triality ($D_4$), lattice theory ($F_4$) and quasicrystals ($H_4$).  We have given the quaternionic representations of Coxeter groups used in the literature a geometric interpretation as vectors and spinors in disguise, and clarified and streamlined the quaternionic formalism used in the literature. We hope that our  approach will help open up new applications as a result of simplified calculations and increased geometric clarity. In particular, the concepts used here readily generalise to arbitrary dimensions.


\subsection*{Acknowledgment}
I would like to thank the anonymous referees, Eckhard Hitzer for a thorough reading of the manuscript, and Eckhard Hitzer, Anthony Lasenby, Joan Lasenby, Reidun Twarock, Mike Hobson and C\'eline B\oe hm for helpful discussions.


\end{document}